\newtheorem{theorem}{Theorem}
\newtheorem{lemma}{Lemma}
\newtheorem{proof}{Proof}
\begin{document}

\title{Successive Local and Successive Global Omniscience}

%

\author{Anoosheh Heidarzadeh and Alex Sprintson\\ Texas A\&M University, College Station, TX 77843 USA} 

\maketitle
\thispagestyle{empty}  

\begin{abstract}
This paper considers two generalizations of the cooperative data exchange problem, referred to as the \emph{successive local omniscience} (SLO) and the \emph{successive global omniscience} (SGO). The users are divided into $\ell$ nested sub-groups. Each user initially knows a subset of packets in a ground set $X$ of size $k$, and all users wish to learn all packets in $X$. The users exchange their packets by broadcasting coded or uncoded packets. In SLO or SGO, in the $l$th ($1\leq l\leq \ell$) round of transmissions, the $l$th smallest sub-group of users need to learn all packets they collectively hold or all packets in $X$, respectively. The problem is to find the minimum sum-rate (i.e., the total transmission rate by all users) for each round, subject to minimizing the sum-rate for the previous round. To solve this problem, we use a linear-programming approach. For the cases in which the packets are randomly distributed among users, we construct a system of linear equations whose solution characterizes the minimum sum-rate for each round with high probability as $k$ tends to infinity. Moreover, for the special case of two nested groups, we derive closed-form expressions, which hold with high probability as $k$ tends to infinity, for the minimum sum-rate for each round. 
\end{abstract}




\section{Introduction}
The problem of minimizing the total communication rate between users in a distributed system for achieving \emph{omniscience}, i.e., to learn all pieces of knowledge being distributed among users, is a fundamental problem in information theory \cite{CAZDLS:2016}. An example of such problems is \emph{cooperative data exchange} (CDE) \cite{RCS:2007}. The CDE problem considers a group of $n$ users where each user has a subset of packets in a ground set $X$ of size $k$, and wants the rest of packets in $X$. To exchange all their packets, the users broadcast (uncoded or coded) packets over a shared lossless channel, and the problem is to find the minimum sum-rate (i.e., the total transmission rate by all users) such that each user learns all packets in $X$.



Originally, CDE was proposed in \cite{RCS:2007} for a broadcast network, and was later generalized for arbitrary networks in~\cite{CW:2014,GL:2012}. Several solutions for CDE were proposed in~\cite{SSBR:2010, SSBR2:2010, MPRGR:2016}. Several extensions of CDE were also studied, e.g., in~\cite{CW:2011,YSZ:2014,YS:2014,HS:2015,HS1:2016,CH:2016}. Moreover, it was shown in~\cite{CW:2014}, and more recently in~\cite{HS:2015} and~\cite{HS1:2016}, that a solution to CDE and CDE with erasure/error correction can be characterized in closed-form when packets are randomly distributed among users. 



In real-world applications, e.g., when users have different priority, or they are in different locations, different groups of users may have different objectives regarding achieving omniscience in different rounds of transmissions. Addressing such scenarios, in the literature, there have been two extensions of CDE: (i) successive omniscience (SO) \cite{CAZDLS:2016}, and (ii) CDE with priority (CDEP) \cite{HS2:2016}. In SO, in the first round of transmissions, a given subset of users achieve \emph{local omniscience}, i.e., they learn all the packets they collectively have, and in the second round of transmissions, all users in the system achieve omniscience. In CDEP, in the first round of transmissions, a given subset of users achieve \emph{global omniscience}, i.e., they learn all the packets in $X$, and in the second round of transmissions, the rest of the users achieve omniscience. 

In this work, we consider extensions of SO and CDEP scenarios, referred to as the \emph{successive local omniscience} (SLO), and \emph{successive global omniscience} (SGO), where the users are divided into $\ell$ ($1\leq \ell\leq n$) nested sub-groups. In the $l$th ($1\leq l\leq \ell$) round of transmissions, the $l$th smallest sub-group of users need to achieve local or global omniscience in SLO or SGO, respectively. The problem is to find the minimum sum-rate for each round, subject to minimizing the sum-rate for the previous round. 

\subsection{Our Contributions}
We use a multi-objective linear programming (MOLP)  with $O(2^{n}\cdot \ell)$ constraints and $n\ell$ variables to solve the SLO and SGO problems for any arbitrary problem instance. For any instance where the packets are randomly distributed among users, we identify a system of $n\ell$ linear equations with $n\ell$ variables whose solution characterizes the minimum sum-rate for each round with high probability as $k$ tends to infinity. Moreover, for the special case of two nested groups, we derive closed-form expressions for the minimum sum-rate for each round which hold with high probability as $k$ tends to infinity.  

\section{Problem Setup}\label{sec:PS}
Consider a group of $n$ users $N=\{1,\dots,n\}$ and a set of $k$ packets $X=\{x_1,x_2,\dots,x_k\}$. Initially, each user $i\in N$ holds a subset $X_i$ of packets in $X$. Assume, without loss of generality, that $X=\cup_{i\in N} X_i$. Suppose that the index set of packets available at each user is known by all other users. Assume that each packet can be partitioned into an arbitrary (but the same for all packets) number of chunks of equal size. The ultimate objective of all users is to achieve \emph{omniscience}, i.e., to learn all the packets in $X$, via broadcasting (coded or uncoded) chunks over an erasure/error-free channel. This scenario is known as the \emph{cooperative data exchange} (CDE). In CDE, the problem is to find the minimum \emph{sum-rate}, i.e., the total transmission rate by all users, where the \emph{transmission rate} of each user is the total number of chunks being transmitted by the user, normalized by the number of chunks per packet.   


In this work, we consider two generalizations of CDE where the transmissions are divided into multiple rounds, and different groups of users have different objectives in each round: (i) \emph{successive local omniscience} (SLO), and (ii) \emph{successive global omniscience} (SGO). Fix an arbitrary integer $1\leq \ell\leq n$, and arbitrary integers $1< n_1<n_2<\dots <n_{\ell}=n$. Suppose that the set $N$ of $n$ users is divided into $\ell$ nested sub-groups $\emptyset\neq N_1\subsetneq N_2\subsetneq\dots\subsetneq N_{\ell}=N$, where $N_{l}\triangleq\{1,\dots,n_l\}$, $\forall 1\leq l\leq \ell$. Let $X^{(l)}$ be the set of all packets that all users in $N_l$ collectively hold, i.e., $X^{(l)}\triangleq\cup_{i\in N_l} X_i$, $\forall 1\leq l\leq \ell$. (Note that $X^{(\ell)}=X$.) Let $\overline{X}^{(l)}_i$ be the set of all packets in $X^{(l)}$ that user $i$ does not hold, i.e., $\overline{X}^{(l)}_i\triangleq X^{(l)} \setminus X_i$, $\forall i\in N_l$, $\forall 1\leq l\leq \ell$. (Note that $\overline{X}^{(\ell)}_i=X\setminus X_i$, $\forall i\in N$.) For the ease of notation, let $\overline{X}_i\triangleq \overline{X}^{(\ell)}_i$, $\forall i\in N$.  

In SLO, in the $l$th ($1\leq l\leq \ell$) round of transmissions, the objective of all users in $N_l$ is to achieve \emph{local omniscience}, i.e., to learn all packets in $X^{(l)}$, without any help from the users in $N\setminus N_l$. (Each user $i\in N_l$ needs to learn all packets in $\overline{X}^{(l)}_i$.) In SGO, in the $l$th ($1\leq l\leq \ell$) round of transmissions, the objective of all users in $N_l$ is to achieve \emph{global omniscience}, i.e., to learn all packets in $X$, possibly with the help of users in $N\setminus N_l$. (Each user $i\in N_l$ needs to learn all packets in $\overline{X}_i$.) Note that, for $\ell=1$, SLO and SGO reduce to CDE. 



Let $r^{(l)}_i$ be the transmission rate of user $i$ in the $l$th round. Let $r^{(l)}_S\triangleq\sum_{i\in S} r^{(l)}_i$, $\forall S\subseteq N$, be the sum-rate of all users in $S$ in the $l$th round, and let $r^{(0)}_S\triangleq 0$, $\forall S\subseteq N$. In SLO and SGO, the problem is to find the minimum $r^{(l)}_N$ for each $1\leq l\leq \ell$, subject to minimizing $r^{(l-1)}_N$. (Note that this is equivalent to finding the minimum $r^{(l)}_N$ for each $1\leq l\leq \ell$, subject to minimizing $r^{(m)}_N$ for all $1\leq m<l$.) Our goal is to solve this problem for any given problem instance $\{X_i\}$.

\section{Arbitrary Problem Instances}\label{sec:MR}




Using similar techniques previously used for CDE in \cite{CW:2014}, each round in SLO and SGO can be reduced to a multicast network coding scenario. Thus, the necessary and sufficient conditions for achieving \emph{local} and \emph{global} omniscience in the $l$th round are given by the following cut-set constraints: \[\sum_{m=1}^{l} r^{(m)}_S\geq \left|\cap_{i\in N_l\setminus S} \overline{X}^{(l)}_i \right|, \forall S\subsetneq N_l,\] and \[\sum_{m=1}^{l} r^{(m)}_S\geq \left|\cap_{i\in N\setminus S} \overline{X}_i \right|, \forall S\subsetneq N, N_{l-1}\subset S, N_{l}\not\subset S,\] respectively. (We sketch the proof of necessity of these constraints in the proofs of Theorems~\ref{thm:ArbitrarySLO} and~\ref{thm:ArbitrarySGO}, and omit the proof of their sufficiency which relies on the standard network-coding argument \cite{CW:2014}.) Based on these constraints, for any instance $\{X_i\}$, one can find a solution to SLO or SGO by solving a multi-objective linear programming (MOLP) (see Theorem~\ref{thm:ArbitrarySLO} and Theorem~\ref{thm:ArbitrarySGO}). 

The special case of the following results for $\ell=2$ were previously presented in \cite{CAZDLS:2016} and \cite{HS2:2016}.

\begin{theorem}\label{thm:ArbitrarySLO}
For any instance $\{X_i\}$, any solution to the SLO problem is a solution to the following MOLP (and vice versa):
\begin{eqnarray}\label{eq:LPSLO}
\mathrm{min} && \hspace{-1.25em} r^{(\ell)}_N \\[-0.25em]  \nonumber \label{eq:SLOC1}
 && \hspace{-1.25em} \dots\\[-0.25em] 
\label{eq:SLOO2} \mathrm{min} && \hspace{-1.25em} r^{(2)}_N \\ \label{eq:SLOO1} \mathrm{min} && \hspace{-1.25em} r^{(1)}_N \\[-0.25em]  
\mathrm{s.t.} 
&& \label{eq:SLOC2} \hspace{-1.25em} r^{(1)}_S\geq \bigg|\bigcap_{i\in N_1\setminus S} \overline{X}^{(1)}_i\bigg|, \forall S\subsetneq N_1\\[-0.25em] 
&& \label{eq:SLOC3} \hspace{-1.25em} \sum_{l=1}^{2} r^{(l)}_S\geq \bigg|\bigcap_{i\in N_{2}\setminus S}\overline{X}^{(2)}_i\bigg|, \forall S\subsetneq N_{2}\hspace{1.75em}\\ \nonumber
&& \hspace{-1.25em} \dots \\
&& \label{eq:SLOC4} \hspace{-1.25em} \sum_{l=1}^{\ell} r^{(l)}_S\geq \bigg|\bigcap_{i\in N_{\ell}\setminus S}\overline{X}^{(\ell)}_i\bigg|, \forall S\subsetneq N_{\ell}\hspace{1.5em}\\[-0.25em]  \nonumber
&&  \hspace{-1.25em} (r^{(l)}_i\geq 0, \forall i\in N_l, \forall 1\leq l\leq \ell)\\ \nonumber
&& \hspace{-1.25em} (r^{(l)}_i= 0, \forall i\in N\setminus N_l, \forall 1\leq l\leq \ell)
\end{eqnarray} 
\end{theorem}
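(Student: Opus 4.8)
The plan is to prove the theorem by showing that the feasible region of the MOLP \eqref{eq:LPSLO}--\eqref{eq:SLOC4} coincides exactly with the set of rate vectors $\{r^{(l)}_i\}$ attainable by valid SLO schemes, and then observing that the successive (lexicographic) minimization of $r^{(1)}_N, r^{(2)}_N,\dots,r^{(\ell)}_N$ is the \emph{same} objective on both sides, so that an equality of feasible regions upgrades to an equality of optimal solutions. Accordingly, I would split the set equality into necessity (every achievable rate vector is feasible) and sufficiency (every feasible rate vector is achievable), and deal with the objective ordering at the very end.

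For necessity, the two routine constraints come first. Non-negativity $r^{(l)}_i\geq 0$ holds by definition of a transmission rate, and the vanishing constraints $r^{(l)}_i=0$ for $i\in N\setminus N_l$ encode the defining SLO requirement that local omniscience for $N_l$ in round $l$ be reached \emph{without} help from $N\setminus N_l$, so those users are silent in round $l$. The substantive part is the cumulative cut-set family \eqref{eq:SLOC2}--\eqref{eq:SLOC4}. Fixing a round $l$ and a set $S\subsetneq N_l$, I would set $W\triangleq\bigcap_{i\in N_l\setminus S}\overline{X}^{(l)}_i$, the packets of $X^{(l)}$ held by no user outside $S$, and run a standard cut-set/entropy argument: let $Z$ be the initial knowledge of the users in $N_l\setminus S$, and let $T_S$ and $T_{\bar S}$ collect all chunks broadcast by $S$ and by $N_l\setminus S$, respectively, over rounds $1$ through $l$ (by the vanishing constraints every chunk sent up to round $l$ originates inside $N_l$). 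Local omniscience for $N_l$ by the end of round $l$ forces $H(W\mid Z,T_S,T_{\bar S})=0$.

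The key step, and the one I expect to be the main obstacle to state cleanly, is removing $T_{\bar S}$ from the conditioning. In the broadcast model every user hears every prior chunk, so by induction on transmission time each chunk sent by a user in $N_l\setminus S$ is a deterministic function of $Z$ together with the chunks already sent; unwinding this recursion shows $T_{\bar S}$ is a function of $(Z,T_S)$, whence $H(W\mid Z,T_S)=0$. Since no user in $N_l\setminus S$ holds any packet of $W$ and distinct packets are independent, $W$ is independent of $Z$, so $|W|=H(W)=I(W;Z,T_S)=I(W;T_S\mid Z)\leq H(T_S)\leq \sum_{m=1}^{l} r^{(m)}_S$ in chunk-normalized units, which is precisely \eqref{eq:SLOC4}. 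The subtle bookkeeping here is the accumulation over rounds: because chunks from earlier rounds remain available to the terminals, the bound must be on $\sum_{m\leq l} r^{(m)}_S$ rather than on $r^{(l)}_S$ alone, and this is exactly what the cut $T_S$ (spanning rounds $1$ through $l$) captures.

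For sufficiency I would not reprove network coding. Reducing round $l$ to a multicast instance whose sources are the users of $N_l$ and in which every such user is a terminal required to recover $X^{(l)}$, the satisfied constraints \eqref{eq:SLOC4} meet the max-flow/min-cut conditions, so a linear scheme over a sufficiently large field realizes the prescribed rates; this is the standard argument underlying the reduction of~\cite{CW:2014} and is omitted. Finally, since both the SLO problem and the MOLP minimize $r^{(1)}_N$ first and then, for each successive $l$, minimize $r^{(l)}_N$ subject to the previously fixed optima, the established equality of feasible regions yields equality of the lexicographically optimal solutions, which completes the proof.
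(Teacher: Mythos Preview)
Your proposal is correct and follows essentially the same approach as the paper: both argue necessity of the per-round cut-set constraints and defer sufficiency to the standard multicast network-coding reduction of~\cite{CW:2014}. You give a more explicit entropy derivation of the cut-set bound, whereas the paper's sketch additionally remarks on why constraints indexed by $S\not\subset N_l$ are redundant; neither difference changes the underlying strategy.
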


\begin{proof}[Proof (Sketch)]
In the first round, all users in $N_1$ need to learn $X^{(1)}$. Thus, for any (proper) subset $S$ of users in $N_1$, the corresponding constraint $r^{(1)}_S\geq |\cap_{i\in N_1\setminus S} \overline{X}^{(1)}_i|$ is necessary. This is due to the fact that, for any $S\subsetneq N_1$, each user $i\in N_1\setminus S$ needs to learn $\overline{X}^{(1)}_i$. This yields the constraints in~\eqref{eq:SLOC2}. For any other $S$, the corresponding constraint is, however, unnecessary. This comes from the fact that $r^{(1)}_S =r^{(1)}_{S\cap N_1} + r^{(1)}_{S\setminus N_1}= r^{(1)}_{S\cap N_1} \geq |\cap_{i\in N_1\setminus (S\cap N_1)} \overline{X}^{(1)}_i|=|\cap_{i\in N_1\setminus S} \overline{X}^{(1)}_i|$. 



In the second round, all users in $N_2$ need to learn $X^{(2)}$. Similarly as above, for any (proper) subset $S$ of users in $N_2$, the corresponding constraint $r^{(1)}_S+r^{(2)}_S\geq |\cap_{i\in N_2\setminus S} \overline{X}^{(2)}_i|$ imposes a necessary constraint, and hence the constraints in~\eqref{eq:SLOC3}. However, for any other $S$, the corresponding constraint is unnecessary. This is because $r^{(1)}_S+r^{(2)}_S = r^{(1)}_{S\cap N_2}+ r^{(2)}_{S\cap N_2} \geq |\cap_{i\in N_2\setminus (S\cap N_2)} \overline{X}^{(2)}_i|=|\cap_{i\in N_2\setminus S} \overline{X}^{(2)}_i|$. 

Repeating the same argument as above, it follows that the necessary constraints for all users in $N_l$ to learn $X^{(l)}$ in the $l$th round are $\sum_{1\leq m\leq l} r^{(m)}_S\geq |\cap_{i\in N_l\setminus S} \overline{X}^{(l)}_i|$, $\forall S\subsetneq N_l$.
\end{proof}

\begin{theorem}\label{thm:ArbitrarySGO}
For any instance $\{X_i\}$, any solution to the SGO problem is a solution to the following MOLP (and vice versa):
\begin{eqnarray}\label{eq:LPSGO}
\mathrm{min} && \hspace{-1.25em} r^{(\ell)}_N \\[-0.25em]  \nonumber \label{eq:SGOC1}
 && \hspace{-1.25em} \dots\\[-0.25em] 
\mathrm{min} && \hspace{-1.25em} r^{(2)}_N \\ \mathrm{min} && \hspace{-1.25em} r^{(1)}_N \\[-0.25em]  
\mathrm{s.t.} 
&& \label{eq:SGOC2} \hspace{-1.25em} r^{(1)}_S\geq \bigg|\bigcap_{i\in N\setminus S} \overline{X}_i\bigg|, \forall S\subsetneq N, N_1\not\subset S\\[-0.25em]  
&& \label{eq:SGOC3} \hspace{-1.25em} \sum_{l=1}^{2} r^{(l)}_S\geq \bigg|\bigcap_{i\in N\setminus S}\overline{X}_i\bigg|, \forall S\subsetneq N, N_1\subset S, N_2\not\subset S \hspace{1.75em}\\ 
\nonumber
&& \hspace{-1.25em} \dots \\
&& \label{eq:SGOC4} \hspace{-1.25em} \sum_{l=1}^{\ell} r^{(l)}_S\geq \bigg|\bigcap_{i\in N\setminus S}\overline{X}_i\bigg|, \forall S\subsetneq N, N_{\ell-1}\subset S, N_{\ell}\not\subset S \hspace{2em}\\[-0.25em]  \nonumber
&&  \hspace{-1.25em} (r^{(l)}_i\geq 0, \forall i\in N, \forall 1\leq l\leq \ell)
\end{eqnarray} 
\end{theorem}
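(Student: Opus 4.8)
\textbf{Proof proposal for Theorem~\ref{thm:ArbitrarySGO}.}
The plan is to follow the template of the proof of Theorem~\ref{thm:ArbitrarySLO}, reducing the $l$th round of SGO to a single multicast network-coding instance and reading off its cut-set bounds, while tracking the one structural feature that separates SGO from SLO. Since users in $N\setminus N_l$ are now allowed to help, \emph{every} user may transmit in \emph{every} round, so there is no per-round zero-rate constraint $r^{(l)}_i=0$, and each active sink still demands the whole of $X$. Consequently the right-hand side of every cut-set bound is measured against $\overline{X}_i=X\setminus X_i$ rather than against $\overline{X}^{(l)}_i$, which is why the same quantity $|\cap_{i\in N\setminus S}\overline{X}_i|$ appears on the right of all of \eqref{eq:SGOC2}--\eqref{eq:SGOC4}.

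First I would establish necessity of a generic cut-set bound. Fix $S\subsetneq N$ and consider $\cap_{i\in N\setminus S}\overline{X}_i=X\setminus\cup_{i\in N\setminus S}X_i$, the packets held by no user outside $S$; since $X=\cup_{i\in N}X_i$, each such packet is held by some user in $S$ and by nobody in $N\setminus S$, so information about it can reach others only through transmissions of users in $S$. Hence, if round $l$ has a sink still demanding these packets, i.e.\ a user $j\in N_l\setminus S$ (which exists precisely when $N_l\not\subset S$, and which misses every such packet because $j\in N\setminus S$), then what $j$ receives from $S$ in rounds $1,\dots,l$ must already determine all of $\cap_{i\in N\setminus S}\overline{X}_i$. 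This forces $\sum_{m=1}^{l}r^{(m)}_S\geq|\cap_{i\in N\setminus S}\overline{X}_i|$, exactly the inequality in the theorem.

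Next I would determine in which round each cut $S$ contributes a \emph{non-redundant} constraint, which is the origin of the threshold conditions $N_{l-1}\subset S$, $N_l\not\subset S$. Because $N_1\subsetneq\dots\subsetneq N_\ell=N$ is nested and $S\subsetneq N$, the set $\{l:N_l\not\subset S\}$ is nonempty and upward closed (if it contains $l$ it contains every $l'\geq l$), so it has a smallest element $l^\star$; setting $N_0\triangleq\emptyset$, this $l^\star$ is the unique index with $N_{l^\star-1}\subset S$ and $N_{l^\star}\not\subset S$. For every round $m<l^\star$ we have $N_m\subset S$, so $N_m\setminus S=\emptyset$ and $S$ imposes no demand in round $m$. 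For every round $l'>l^\star$ the bound has the same right-hand side as the round-$l^\star$ bound but a cumulative left-hand side $\sum_{m=1}^{l'}r^{(m)}_S$ that merely adds the nonnegative terms $r^{(l^\star+1)}_S,\dots,r^{(l')}_S$; it is therefore implied by the round-$l^\star$ bound and can be dropped. Thus each $S$ contributes exactly its round-$l^\star$ constraint, and collecting these over all $S\subsetneq N$ reproduces precisely the partition in \eqref{eq:SGOC2}--\eqref{eq:SGOC4}.

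The main obstacle, which I would treat as in~\cite{CW:2014}, is converse vs.\ achievability: the steps above show only that any SGO scheme satisfies the MOLP constraints (and that the lexicographically ordered objectives $r^{(1)}_N,\dots,r^{(\ell)}_N$ match ``minimize round $l$ subject to the minima of all earlier rounds''). For the reverse inclusion I would invoke the standard linear network-coding achievability result: any rate tuple meeting the listed cut-set bounds is realizable by a code in which each round multicasts to the currently active sinks, so a feasible MOLP point yields a valid SGO scheme with the same per-round sum-rates. The delicate point to verify is that this achievability can proceed \emph{round by round} — committing the round-$l$ transmissions before round $l+1$ — without enlarging any later sum-rate, which is what makes the single MOLP with the successive objectives faithful to the round-by-round SGO formulation.
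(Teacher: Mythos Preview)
Your proposal is correct and follows essentially the same approach as the paper's own proof sketch: both establish the cut-set necessity $\sum_{m=1}^{l} r^{(m)}_S\geq |\cap_{i\in N\setminus S}\overline{X}_i|$ whenever $N_l\not\subset S$, argue that constraints for rounds beyond the first one where a sink appears are redundant (the paper phrases this as ``such users learn $X$ in the first round,'' you phrase it as ``same right-hand side, larger left-hand side''—same content), and defer sufficiency to the standard network-coding achievability of~\cite{CW:2014}. Your device of defining $l^\star(S)$ as the unique index with $N_{l^\star-1}\subset S$, $N_{l^\star}\not\subset S$ is a slightly tidier way to organize the partition in \eqref{eq:SGOC2}--\eqref{eq:SGOC4} than the paper's round-by-round exposition, but the argument is the same.
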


\begin{proof}[Proof (Sketch)]
In the first round, all users in $N_1$ need to learn $X$ and none of users in $N\setminus N_1$ need to learn $X$. Thus, for any (proper) subset $S$ of users in $N$ not containing $N_1$, the corresponding constraint $r^{(1)}_S\geq |\cap_{i\in N\setminus S} \overline{X}_i|$ is necessary, and hence~\eqref{eq:SGOC3}. For any $S$ containing $N_1$, the corresponding constraint is however unnecessary since $N\setminus S$ consists only of users which need not learn $X$ in the first round.   

In the second round, all users in $N_2$ need to learn $X$. Since all users in $N_1$ learn $X$ in the first round, for any $S$ containing $N_1$ but not $N_2$, the corresponding constraint $r^{(1)}_S+r^{(2)}_S\geq |\cap_{i\in N\setminus S} \overline{X}_i|$ imposes a necessary constraint, and hence~\eqref{eq:SGOC4}. For any $S$ containing $N_2$, the corresponding constraint is unnecessary since $N\setminus S$ consists only of users not in $N_2$, and none of such users need to learn $X$ in the second round. Note that, for any (proper) $S$ not containing $N_1$, the corresponding constraint is redundant since $N\setminus S$ includes some user(s) in $N_1$, and such users learn $X$ in the first round. 

By using a similar argument as above, it follows that the necessary constraints for all users in $N_l$ to learn $X$ in the $l$th round are $\sum_{1\leq m\leq l} r^{(m)}_S\geq |\cap_{i\in N\setminus S} \overline{X}_i|$, $\forall S\subsetneq N$, $N_{l-1}\subset S$, $N_l\not\subset S$.
\end{proof}

\section{Random Packet Distribution}
In this section, we assume that each packet is available at each user, independently from other packets and other users, with probability $0<p<1$. (This model is referred to as the \emph{random packet distribution} in~\cite{CW:2014,HS:2015,HS1:2016}.) 

Theorems~\ref{thm:SLORandom} and ~\ref{thm:SGORandom} characterize, with probability approaching $1$ (w.p.~$\rightarrow 1$) as $k$ tends to infinity ($k\rightarrow \infty$), a solution to SLO and SGO by a system of linear equations (SLE) for any \emph{random} problem instance under the assumption above. 



\begin{theorem}\label{thm:SLORandom}
For any random instance $\{X_i\}$, w.p.~$\rightarrow 1$ as $k\rightarrow\infty$, a solution to the SLO problem is given by the following SLE:\vspace{-0.5em}  
\begin{eqnarray}
&&  \label{eq:SLE1E1}\hspace{-2em} r^{(1)}_{N_1\setminus \{i\}} = \left|\overline{X}^{(1)}_{i}\right|, \forall i\in N_1\\
&& \label{eq:SLE1E2}\hspace{-2em} r^{(l)}_{N_{l,j}} = \bigg|\bigcap_{i\in N_l\setminus N_{l,j}}\overline{X}^{(l)}_{i}\bigg|, \forall 1\leq j\leq d_l, \forall 1<l\leq \ell\\ 
&&  \label{eq:SLE1E3}\hspace{-2em} r^{(l)}_i = 0, \forall i\in N\setminus N_l, \forall 1\leq l\leq \ell\\
&&  \label{eq:SLE1E4}\hspace{-2em} r^{(l)}_i = 0, \forall i\in N_{l-1}, \forall 1<l\leq\ell 
\end{eqnarray} where $d_l\triangleq n_l-n_{l-1}$ for all $1<l\leq \ell$ and $N_{l,j}\triangleq \{n_{l-1}+1,\dots,n_{l-1}+j\}$ for all $1<l\leq \ell$ and all $1\leq j\leq d_l$.
\end{theorem}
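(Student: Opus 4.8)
The plan is to invoke Theorem~\ref{thm:ArbitrarySLO} and reduce the claim to showing that, w.p.\ $\rightarrow 1$, the (unique) solution of the SLE \eqref{eq:SLE1E1}--\eqref{eq:SLE1E4} is a lexicographic optimum of the MOLP \eqref{eq:LPSLO}. I would induct on the round index $l$ and, at each round, establish (a) feasibility of the SLE point for the round-$l$ cut constraints and (b) that it attains the minimum of $r^{(l)}_N$ once the previous rounds are fixed. First I would record the only probabilistic input. Writing $q:=1-p$, a packet lies in $\bigcap_{i\in T}\overline{X}^{(l)}_i$ exactly when no user in $T$ and at least one user in $N_l\setminus T$ holds it; since these events concern disjoint user sets, $\mathbb{E}\,|\bigcap_{i\in T}\overline{X}^{(l)}_i|=k\,q^{|T|}\bigl(1-q^{\,n_l-|T|}\bigr)$. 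A Chernoff bound together with a union bound over the $O(2^{n}\ell)$ pairs $(T,l)$ shows that, w.p.\ $\rightarrow 1$, every such cardinality is within $O(\sqrt{k\log k})$ of its mean simultaneously; I condition on this event throughout. I would also note that the SLE is well posed: \eqref{eq:SLE1E1} is the cooperative-data-exchange system on $(N_1,X^{(1)})$, whose coefficient matrix (all-ones minus identity) is invertible for $n_1\ge 2$, while \eqref{eq:SLE1E2} is triangular in the rates of the new users $N_l\setminus N_{l-1}$. Because \eqref{eq:SLE1E3}--\eqref{eq:SLE1E4} force each user to transmit only in its joining round, the cumulative rate $R_i:=\sum_m r^{(m)}_i$ equals $r^{(1)}_i$ for $i\in N_1$ and $r^{(l)}_i$ for $i\in N_l\setminus N_{l-1}$, so $\sum_{m\le l} r^{(m)}_S=\sum_{i\in S}R_i$.

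Optimality is the easy direction and uses no randomness. Since every MOLP-feasible point has $r^{(m)}_i=0$ for $i\notin N_m$, evaluating the round-$l$ cut constraint (cf.\ \eqref{eq:SLOC4}) at $S=N_l\setminus N_{l-1}$ gives $\sum_{m\le l}r^{(m)}_{N_l\setminus N_{l-1}}\ge \bigl|\bigcap_{i\in N_{l-1}}\overline{X}^{(l)}_i\bigr|$; as $r^{(m)}_i=0$ for new users $i$ when $m<l$ and $r^{(l)}_i\ge 0$, this forces $r^{(l)}_N\ge \bigl|\bigcap_{i\in N_{l-1}}\overline{X}^{(l)}_i\bigr|$, a lower bound independent of the previous rounds. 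Taking $j=d_l$ in \eqref{eq:SLE1E2} shows the SLE attains this value, so once feasibility is in hand the SLE is lexicographically optimal at every round.

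The crux is feasibility, i.e.\ that the cumulative vector $(R_i)_{i\in N_l}$ meets all cut constraints $\sum_{i\in S}R_i\ge \bigl|\bigcap_{i\in N_l\setminus S}\overline{X}^{(l)}_i\bigr|$, $S\subsetneq N_l$; I would prove this by induction on $l$, splitting on whether $S$ contains all new users. If $N_l\setminus N_{l-1}\subseteq S$, put $S'=S\cap N_{l-1}\subsetneq N_{l-1}$; using $\sum_{i\in N_l\setminus N_{l-1}}R_i=\bigl|\bigcap_{i\in N_{l-1}}\overline{X}^{(l)}_i\bigr|$ (the $j=d_l$ equation) and the set identity
\[\bigcap_{i\in N_{l-1}\setminus S'}\overline{X}^{(l-1)}_i=\Bigl(\bigcap_{i\in N_{l-1}\setminus S'}\overline{X}^{(l)}_i\Bigr)\setminus\Bigl(\bigcap_{i\in N_{l-1}}\overline{X}^{(l)}_i\Bigr),\]
the constraint collapses exactly to the round-$(l-1)$ constraint $R^{(l-1)}_{S'}\ge \bigl|\bigcap_{i\in N_{l-1}\setminus S'}\overline{X}^{(l-1)}_i\bigr|$, which holds by the inductive hypothesis; this step is purely structural and deterministic. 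If instead $S$ omits some new user, the defining sets $S=N_{l,j}$ hold with equality by \eqref{eq:SLE1E2}, and for every other such $S$ I would substitute the concentrated values and reduce the required inequality to a polynomial inequality in $q$ factoring as $(1-q)^{a}P(q)$ with $P>0$ on $[0,1]$ (for instance $2-3q+q^3\ge 0$), so the slack is $\Theta(k)$ for fixed $p\in(0,1)$ and survives the $O(\sqrt{k\log k})$ fluctuations. The base case $l=1$ is precisely the known feasibility of the balanced CDE solution for a random instance. Nonnegativity then follows: for $l>1$, $r^{(l)}_{n_{l-1}+j}=\bigl|\bigcap_{i\in N_l\setminus N_{l,j}}\overline{X}^{(l)}_i\bigr|-\bigl|\bigcap_{i\in N_l\setminus N_{l,j-1}}\overline{X}^{(l)}_i\bigr|\ge 0$ holds deterministically by monotonicity of the nested intersections, while for $l=1$ the balanced rate $r^{(1)}_i=\tfrac{1}{n_1-1}\sum_{i'}|\overline{X}^{(1)}_{i'}|-|\overline{X}^{(1)}_i|$ is positive on the conditioning event.

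I expect the main obstacle to be the second feasibility case: confirming in full generality that every non-defining cut constraint not absorbed by the inductive reduction has strictly positive expected slack, i.e.\ that the associated polynomial in $q$ vanishes only at $q=1$. Computation in small cases suggests the slacks always carry a factor $(1-q)$, but verifying this uniformly over all $S\subsetneq N_l$ and all group sizes is the delicate, computational heart of the argument; everything else is either deterministic (optimality, nonnegativity for $l>1$, and the inductive reduction) or a routine concentration-plus-union-bound estimate. Finally, a union bound over the finitely many rounds and constraints combines the per-round statements into the claimed w.p.\ $\rightarrow 1$ conclusion, and Theorem~\ref{thm:ArbitrarySLO} converts lexicographic optimality of the SLE into a solution of the SLO problem.
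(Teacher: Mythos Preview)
Your outline is sound and, on the two structural points where it can be compared to the paper, it takes a genuinely different route. (Note that the paper only gives the proof for $\ell=2$, via Lemmas~\ref{lem:feasibility} and~\ref{lem:optimality}, and defers the general~$\ell$ to an extended version, so the comparison below is with that $\ell=2$ argument.)

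\textbf{Optimality.} The paper proves optimality by LP duality (Lemma~\ref{lem:optimality}): it writes down the dual of each round's LP and exhibits an explicit dual solution matching the primal value. Your single-cut argument, evaluating the round-$l$ constraint at $S=N_l\setminus N_{l-1}$ to get the instance-wise lower bound $r^{(l)}_N\ge \bigl|\bigcap_{i\in N_{l-1}}\overline{X}^{(l)}_i\bigr|$ and observing that the SLE meets it, is more elementary and avoids duality entirely. Both arguments are deterministic. One small gap: your cut $S=N_l\setminus N_{l-1}$ is not available at $l=1$ (it is not a proper subset), so you still need the classical CDE bound $r^{(1)}_{N_1}\ge \frac{1}{n_1-1}\sum_{i\in N_1}|\overline{X}^{(1)}_i|$, obtained by summing the $n_1$ singleton cuts; you allude to CDE only for feasibility, not optimality, so this should be stated.

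\textbf{Feasibility.} The paper's Lemma~\ref{lem:feasibility} is a flat case analysis on $(|S\cap N_1|,\,S\setminus N_1)$, driven by the concentration Lemma~\ref{lem:concentration} and, crucially, the monotonicity Lemma~\ref{lem:PV} ($z_{m,s}/s$ strictly increasing in $s$). Your inductive reduction for the case $N_l\setminus N_{l-1}\subseteq S$ is a clean improvement: the set identity you use is exactly the content of the paper's check $|\overline{X}^{(1)}_i|+|\cap_{j\in N_1}\overline{X}^{(2)}_j|=|\overline{X}^{(2)}_i|$, but you promote it to a deterministic step that collapses all such constraints to round $l-1$, rather than treating them as isolated subcases. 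What you lose, however, is the paper's workhorse for the remaining case. Your plan to ``factor as $(1-q)^aP(q)$ with $P>0$'' is not how the paper proceeds and is not obviously viable in general: for $\ell=2$ the paper reduces the surviving constraints to inequalities such as \eqref{eq:zzz}, \eqref{eq:zzzz}, \eqref{eq:zzz2}, and dispatches them using Lemma~\ref{lem:PV} together with telescoping in the $z_{n,\cdot}$'s and, in one place, Bernoulli's inequality; these do not come from factoring a single polynomial but from the structural monotonicity of $z_{m,s}/s$. If you want to close your Case~B, importing Lemma~\ref{lem:PV} (or an equivalent convexity statement about $s\mapsto z_{m,s}$) is the missing ingredient; the ad-hoc example $2-3q+q^3\ge 0$ does not scale to arbitrary $S$.

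In summary: your optimality argument is simpler than the paper's duality proof, and your inductive reduction deterministically handles a whole family of constraints that the paper treats case by case; but for the constraints that survive the reduction you are missing the key technical lemma (Lemma~\ref{lem:PV}) that the paper uses in place of your proposed polynomial factoring.
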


\begin{theorem}\label{thm:SGORandom}
For any random instance $\{X_i\}$, w.p.~$\rightarrow 1$ as $k\rightarrow\infty$, a solution to the SGO problem is given by the following SLE:\vspace{-0.5em} 
\begin{eqnarray}
&&  \label{eq:SLE2E1}\hspace{-2.5em} \sum_{m=1}^{m_l} r^{(m)}_{N_l\setminus \{j_l\}} = \bigg|\bigcap_{i\not\in N_l\setminus \{j_l\}}\overline{X}_{i}\bigg|, \forall 1\leq l<\ell\\
&& \label{eq:SLE2E2}\hspace{-2.5em} \sum_{m=1}^{l} r^{(m)}_{N\setminus \{i\}} = \left|\overline{X}_i\right|, \forall i\in N_l\setminus N_{l-1}, \forall 1\leq l\leq \ell\\
&&  \label{eq:SLE2E3}\hspace{-2.5em} r^{(l)}_i = 0, \forall i\in N\setminus N_{l-1}, \forall 1< l\leq \ell\\
&&  \label{eq:SLE2E4}\hspace{-2.5em} r^{(l)}_i = r^{(l)}_j, \forall i,j\in N_{l-1}, \forall 1<l\leq\ell  
\end{eqnarray} where for any $1\leq l< \ell$, $j_l\in N_{m_{l}}\setminus N_{m_{l}-1}$ (for some $1\leq m_{l}\leq l$) such that $\sum_{i\in N_l\setminus\{j_l\}}\left|\overline{X}_i\right|+\left|\cap_{i\not\in N_l\setminus \{j_l\}} \overline{X}_i\right|\geq \sum_{i\in N_l\setminus\{j\}} \left|\overline{X}_i\right|+\left|\cap_{i\not\in N_l\setminus\{j\}} \overline{X}_i\right|$ for all $j\in N_l$.  
\end{theorem}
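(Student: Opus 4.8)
The plan is to prove that, on a high-probability event, the unique solution of the SLE~\eqref{eq:SLE2E1}--\eqref{eq:SLE2E4} is exactly the lexicographically optimal solution of the MOLP in Theorem~\ref{thm:ArbitrarySGO}. I would first reduce the random instance to a deterministic one by concentrating the cut values. A fixed packet lies in $\bigcap_{i\in T}\overline{X}_i$ iff no user in $T$ holds it, an event of probability $(1-p)^{|T|}$ independent across packets, so $\mathbb{E}\,\big|\bigcap_{i\in T}\overline{X}_i\big|=k(1-p)^{|T|}$ and a Chernoff bound gives $\big|\bigcap_{i\in T}\overline{X}_i\big|=k(1-p)^{|T|}(1+o(1))$ for each fixed $T$. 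A union bound over the at most $2^n$ choices of $T$ makes all these estimates hold simultaneously w.p.\ $\to 1$ as $k\to\infty$. On this event every cut value $b_S:=\big|\bigcap_{i\in N\setminus S}\overline{X}_i\big|$ depends, to leading order, only on $|N\setminus S|$ through $\beta_t:=k(1-p)^t$, which is strictly decreasing in $t$; in particular $S\mapsto b_S$ is monotone, a fact I would use throughout.

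Next I would reduce the MOLP by symmetry and solve. In the random model the users within each layer $N_1,\,N_2\setminus N_1,\,\dots$ are exchangeable, so an optimal solution may be taken constant on each layer in every round; and since a layer that has reached omniscience can relay every packet, by round $l$ all of $N_{l-1}$ are interchangeable as relays, which upgrades layer-wise equality to the single common rate of~\eqref{eq:SLE2E4}. Optimality then forces the vanishing round-$l$ rates of~\eqref{eq:SLE2E3}, since a transmission by a user that is still receiving, or not yet required to be omniscient, cannot lower $r^{(l)}_N$. With~\eqref{eq:SLE2E3}--\eqref{eq:SLE2E4} imposed, the remaining equalities~\eqref{eq:SLE2E1}--\eqref{eq:SLE2E2} become an essentially triangular linear system in the cumulative rates $\sum_{m\le l}r^{(m)}_{\cdot}$, which I would solve recursively in $l$. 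I would then verify that all solved rates are non-negative (a consequence of the monotonicity of $b_S$) and that every inequality of~\eqref{eq:SGOC2}--\eqref{eq:SGOC4} not set to equality holds with slack; by the layer symmetry this last check reduces to a finite comparison of $\beta_t$-values in which the tight cuts are the extremal ones.

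Finally I would certify lexicographic optimality, which is the crux. Since $r^{(1)}_N,\dots,r^{(l-1)}_N$ are pinned at their optima, minimizing $r^{(l)}_N$ is the same as minimizing the cumulative $\sum_{m\le l}r^{(m)}_N$, and for each $l$ I would exhibit a dual solution supported exactly on the equalities~\eqref{eq:SLE2E1}--\eqref{eq:SLE2E2} and close the gap by complementary slackness against the primal SLE solution. The definition of $j_l$ is engineered for this: maximizing $\sum_{i\in N_l\setminus\{j\}}\big|\overline{X}_i\big|+\big|\bigcap_{i\notin N_l\setminus\{j\}}\overline{X}_i\big|$ over $j\in N_l$ selects the distinguished cut $S=N_l\setminus\{j_l\}$ that supplies the binding lower bound on the cumulative sum-rate at level $l$, and at the same time keeps the multipliers on the single-user cuts non-negative. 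The hard part will be this dual construction at the intermediate levels $l<\ell$: whereas the final level is governed by the single-user cuts $S=N\setminus\{i\}$, an intermediate level additionally couples the relaying group $N_{l-1}$ to the new group $N_l\setminus N_{l-1}$ through the single cut $S=N_l\setminus\{j_l\}$, and I must show that this cut together with the single-user cuts spans the active set, that no other cut becomes tight once the concentrated values are substituted, and that ties in the maximization defining $j_l$ do not destroy uniqueness of the SLE solution. Establishing dual feasibility of the $j_l$-indexed multiplier --- i.e.\ that the maximizer really is the binding cut --- is the delicate point, and it is there that the monotonicity of $b_S$ and the concentration estimates are used most carefully.
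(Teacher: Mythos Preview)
Your plan---concentrate the cut values, verify feasibility of the SLE solution against all constraints, then certify lexicographic optimality round by round via explicit dual solutions---is exactly the technique the paper uses (demonstrated in Lemmas~\ref{lem:feasibility} and~\ref{lem:optimality} for the SLO special case; the paper defers the proof of Theorem~\ref{thm:SGORandom} itself to an extended version, saying only that it ``uses similar techniques'').

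Two small points to tighten. First, your concentration estimate $\bigl|\bigcap_{i\in T}\overline{X}_i\bigr|\approx k(1-p)^{|T|}$ ignores the conditioning implicit in $X=\bigcup_i X_i$: the paper's Lemma~\ref{lem:concentration} gives instead $z_{n,n-|T|}=\frac{(1-p)^{|T|}-(1-p)^{n}}{1-(1-p)^{n}}$, and the feasibility checks (analogous to the chain of inequalities in Lemma~\ref{lem:feasibility}) need the exact form, not just monotonicity in $|T|$. Second, your exchangeability argument for~\eqref{eq:SLE2E4} is only heuristic: users in different layers of $N_{l-1}$ are \emph{not} exchangeable under the MOLP constraints, so ``all of $N_{l-1}$ are interchangeable as relays'' is not a symmetry of the LP and cannot by itself force equal rates. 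Neither point is fatal, since the duality certificate you propose in the last paragraph---supported on the single-user cuts $N\setminus\{i\}$ together with the distinguished cut $N_l\setminus\{j_l\}$---would establish optimality directly once you write it down, and that is precisely how the paper proceeds in Lemma~\ref{lem:optimality}.
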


Note that~\eqref{eq:SLE1E3} and~\eqref{eq:SLE1E4} imply that for SLO, in the first round, only users in $N_1$ may transmit, and in each round $l>1$, only users in $N_l\setminus N_{l-1}$ need to transmit. A closer look at~\eqref{eq:SLE1E1} and~\eqref{eq:SLE1E2} reveals that for SLO, only the users in $N_1$ may need to transmit at fractional rates, and it suffices for the rest of users in $N\setminus N_1$ to transmit at integral rates. Moreover,~\eqref{eq:SLE2E3} and~\eqref{eq:SLE2E4} imply that for SGO, in the first round, all users in $N$ may transmit, but in each round $l>1$, only users in $N_{l-1}$ need to transmit, and they all can transmit at the same rate.   

Theorems~\ref{thm:SLORandomSpecial} and~\ref{thm:SGORandomSpecial} give a closed-form solution to the SLE's in~\eqref{eq:SLE1E1}-\eqref{eq:SLE1E4} and~\eqref{eq:SLE2E1}-\eqref{eq:SLE2E4} for the special case of $\ell=2$. 

\begin{theorem}\label{thm:SLORandomSpecial}
For any random instance $\{X_i\}$, w.p.~$\rightarrow 1$ as $k\rightarrow\infty$, a solution to the SLO problem for $\ell=2$ is given by 
\begin{equation*}\label{eq:SLOr1}
\tilde{r}^{(1)}_i=\left\{
\begin{array}{ll}
\hspace{-0.25em}\frac{1}{n_1-1}\sum_{j\in N_1} |\overline{X}^{(1)}_j|-|\overline{X}^{(1)}_i|, & \hspace{-0.25em} i\in N_1\\ 
\hspace{-0.25em}0, & \hspace{-0.25em} i\not\in N_1
\end{array} 
\right.\end{equation*} and
\begin{equation*}\label{eq:SLOr2}
\tilde{r}^{(2)}_i=\left\{
\begin{array}{ll}
\hspace{-0.25em}0, & \hspace{-0.25em} i\in N_1\\ 
\hspace{-0.25em}|\cap_{j\not\in N_{2,i-n_1}}\overline{X}^{(2)}_j|-|\cap_{j\not\in N_{2,i-n_1-1}}\overline{X}^{(2)}_j|, & \hspace{-0.25em} i\not\in N_1
\end{array} 
\right.\end{equation*} 
\end{theorem}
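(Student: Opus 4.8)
The plan is to appeal to Theorem~\ref{thm:SLORandom}, which guarantees that w.p.~$\rightarrow 1$ as $k\rightarrow\infty$ the solution of the SLE~\eqref{eq:SLE1E1}--\eqref{eq:SLE1E4} is a solution to the SLO problem. It therefore suffices to verify, by direct substitution, that the stated vectors $\tilde{r}^{(1)}$ and $\tilde{r}^{(2)}$ form the (unique) solution of that SLE specialized to $\ell=2$. For $\ell=2$ the SLE decouples into a first-round system, namely $r^{(1)}_{N_1\setminus\{i\}}=|\overline{X}^{(1)}_i|$ for $i\in N_1$ together with $r^{(1)}_i=0$ for $i\notin N_1$, and a second-round system, namely $r^{(2)}_{N_{2,j}}=|\bigcap_{i\in N_2\setminus N_{2,j}}\overline{X}^{(2)}_i|$ for $1\leq j\leq n-n_1$ together with $r^{(2)}_i=0$ for $i\in N_1$. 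I would treat the two rounds separately.

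For the first round I would write $r^{(1)}_{N_1\setminus\{i\}}=\sum_{j\in N_1}r^{(1)}_j-r^{(1)}_i$ and sum the $n_1$ equations over $i\in N_1$; the left-hand sides total $(n_1-1)\sum_{j\in N_1}r^{(1)}_j$, which (using $n_1>1$) forces $\sum_{j\in N_1}r^{(1)}_j=\frac{1}{n_1-1}\sum_{j\in N_1}|\overline{X}^{(1)}_j|$. Back-substituting into $r^{(1)}_i=\sum_{j\in N_1}r^{(1)}_j-|\overline{X}^{(1)}_i|$ recovers exactly the stated $\tilde{r}^{(1)}_i$ and shows this system has a unique solution, while the constraints $r^{(1)}_i=0$ for $i\notin N_1$ hold by definition.

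For the second round I would exploit that the system is triangular: setting $b_j\triangleq|\bigcap_{i\in N_2\setminus N_{2,j}}\overline{X}^{(2)}_i|$, the equations read $\sum_{m=1}^{j}r^{(2)}_{n_1+m}=b_j$, whence $r^{(2)}_{n_1+j}=b_j-b_{j-1}$, which is precisely the stated $\tilde{r}^{(2)}_{n_1+j}$ once $b_0$ is identified with the empty-index intersection $|\bigcap_{i\in N_2}\overline{X}^{(2)}_i|$. The crucial observation, and the one place where care is needed, is that this boundary term vanishes: since $X^{(2)}=\bigcup_{i\in N_2}X_i$, we have $\bigcap_{i\in N_2}\overline{X}^{(2)}_i=X^{(2)}\setminus\bigcup_{i\in N_2}X_i=\emptyset$, so $b_0=0$ and the telescoping closes with no leftover term. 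This is deterministic, as is the non-negativity $\tilde{r}^{(2)}_{n_1+j}=b_j-b_{j-1}\geq 0$, which follows because $N_{2,j-1}\subsetneq N_{2,j}$ gives $\bigcap_{i\in N_2\setminus N_{2,j-1}}\overline{X}^{(2)}_i\subseteq\bigcap_{i\in N_2\setminus N_{2,j}}\overline{X}^{(2)}_i$.

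The only genuinely probabilistic ingredient, which I expect to be the main obstacle, is the non-negativity of the first-round rates, i.e.\ $\frac{1}{n_1-1}\sum_{j\in N_1}|\overline{X}^{(1)}_j|\geq|\overline{X}^{(1)}_i|$ for every $i\in N_1$. Under the random packet distribution every $|\overline{X}^{(1)}_i|$ concentrates, as $k\rightarrow\infty$, around the common value $k(1-p)\bigl(1-(1-p)^{n_1-1}\bigr)$ with sublinear fluctuations, so the factor $\tfrac{n_1}{n_1-1}>1$ separating the average from any single term yields $\tilde{r}^{(1)}_i\geq 0$ w.p.~$\rightarrow 1$. Strictly speaking this non-negativity is already subsumed by Theorem~\ref{thm:SLORandom} (whose hypotheses make the SLE solution a valid SLO solution w.p.~$\rightarrow 1$), so the essential work reduces to the deterministic algebra above, but I would record the concentration argument for completeness.
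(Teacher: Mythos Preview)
Your argument is correct but takes a genuinely different route from the paper. You derive Theorem~\ref{thm:SLORandomSpecial} as a corollary of Theorem~\ref{thm:SLORandom}: you specialize the SLE~\eqref{eq:SLE1E1}--\eqref{eq:SLE1E4} to $\ell=2$ and solve it in closed form (summing the first-round equations to extract the total, then back-substituting; telescoping the triangular second-round system, with the boundary term $b_0=|\bigcap_{i\in N_2}\overline{X}^{(2)}_i|=0$). This is clean and the algebra is right.

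The paper, by contrast, proves Theorem~\ref{thm:SLORandomSpecial} \emph{directly}, without invoking Theorem~\ref{thm:SLORandom}. It verifies feasibility of $\{\tilde r^{(1)}_i,\tilde r^{(2)}_i\}$ against all the MOLP constraints~\eqref{eq:SLOC2}--\eqref{eq:SLOC3} via a lengthy case analysis that repeatedly uses the concentration estimate (Lemma~\ref{lem:concentration}) and the inequality $z_{m,s_1}/s_1<z_{m,s_2}/s_2$ (Lemma~\ref{lem:PV}); then it proves optimality by exhibiting explicit dual solutions for each of the two LPs. The reason for this choice is structural: the paper defers the proof of Theorem~\ref{thm:SLORandom} to an extended version and states that it ``uses similar techniques'' to the proof given here. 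In other words, the feasibility-plus-duality argument for $\ell=2$ is the template from which the general-$\ell$ result is built, so deriving the $\ell=2$ case from the general case would invert the intended logical order. Your approach is shorter precisely because it outsources the hard probabilistic work (all the $z_{m,s}$ inequalities) to the unproven Theorem~\ref{thm:SLORandom}; the paper's approach is self-contained and supplies those techniques.
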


\begin{theorem}\cite[Theorem~2]{HS2:2016}\label{thm:SGORandomSpecial}
For any random instance $\{X_i\}$, w.p.~$\rightarrow 1$ as $k\rightarrow\infty$, a solution to the SGO problem for $\ell=2$ is given by 
\begin{equation*}\label{eq:SGOr1}
\hspace{0.5em}\tilde{r}^{(1)}_i=\left\{
\begin{array}{ll}
\hspace{-0.25em}\frac{1}{n_1-1}\hspace{-0.25em}\left(\sum_{j\in M} |\overline{X}_j|+|\cap_{j\not\in M}\overline{X}_j|\right)-|\overline{X}_i|, & \hspace{-0.25em} i\in N_1\\ 
\hspace{-0.25em}\frac{1}{n-n_1}\hspace{-0.25em}\left(\sum_{j\not\in M} |\overline{X}_j|-|\cap_{j\not\in M}\overline{X}_j|\right)-|\overline{X}_i|, & \hspace{-0.25em} i\not\in N_1 
\end{array} 
\right.\end{equation*} and 
\begin{equation*}\label{eq:SGOr2}
\hspace{0.25em}\tilde{r}^{(2)}_i=\left\{
\begin{array}{ll}
\hspace{-0.25em} \frac{\sum_{j\not\in M} \left|\overline{X}_j\right|}{n_1(n-n_1)}\hspace{-0.125em}-\frac{\sum_{j\in M} \left|\overline{X}_j\right|}{n_1(n_1-1)}\hspace{-0.125em}-\frac{\hspace{-0.125em}(n\hspace{-0.075em}-\hspace{-0.075em}1)\left|\cap_{j\not\in M}\overline{X}_j\right|}{n_1(n_1-1)(n-n_1)}, & \hspace{-0.25em} i\in N_1\\ 
 \hspace{-0.25em} 0, & \hspace{-0.25em} i\not\in N_1 
\end{array} 
\right.\end{equation*} where $M\triangleq N_1\setminus\{j_1\}$. 
\end{theorem}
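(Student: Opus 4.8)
The plan is to offload all of the probabilistic content onto Theorem~\ref{thm:SGORandom} and reduce the statement to deterministic linear algebra. By that theorem, w.p.~$\rightarrow 1$ as $k\rightarrow\infty$ a solution to the SGO problem is characterized by the SLE~\eqref{eq:SLE2E1}--\eqref{eq:SLE2E4}; hence it suffices to specialize that system to $\ell=2$, solve the resulting square system in closed form, and read off $\tilde r^{(1)}_i$ and $\tilde r^{(2)}_i$. The ``w.p.~$\rightarrow 1$'' qualifier, together with feasibility (nonnegativity) of the solution, is then inherited verbatim and requires no fresh concentration estimate.

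First I would write out the $\ell=2$ specialization. Constraint~\eqref{eq:SLE2E3} gives $r^{(2)}_i=0$ for all $i\notin N_1$, and~\eqref{eq:SLE2E4} gives a common value $r^{(2)}_i=c$ for all $i\in N_1$, so that $r^{(2)}_{N\setminus\{i\}}=(n_1-1)c$ when $i\in N_1$ and $r^{(2)}_{N\setminus\{i\}}=n_1 c$ when $i\notin N_1$. Equation~\eqref{eq:SLE2E2} with $l=1$ reads $r^{(1)}_{N\setminus\{i\}}=|\overline{X}_i|$ for $i\in N_1$, i.e. $r^{(1)}_i=r^{(1)}_N-|\overline{X}_i|$; with $l=2$ it reads $r^{(1)}_{N\setminus\{i\}}+n_1 c=|\overline{X}_i|$ for $i\notin N_1$, i.e. $r^{(1)}_i=r^{(1)}_N+n_1 c-|\overline{X}_i|$. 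Finally~\eqref{eq:SLE2E1} with $l=1$ (where $m_1=1$ and $j_1\in N_1$) contributes the single scalar equation $r^{(1)}_{M}=|\cap_{i\notin M}\overline{X}_i|$ with $M=N_1\setminus\{j_1\}$. This is $n+1$ linear equations in the $n+1$ unknowns $r^{(1)}_1,\dots,r^{(1)}_n,c$.

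Next I would eliminate in two sweeps. Summing $r^{(1)}_i=r^{(1)}_N-|\overline{X}_i|$ over $i\in M$ gives $r^{(1)}_M=(n_1-1)r^{(1)}_N-\sum_{i\in M}|\overline{X}_i|$; equating this to $|\cap_{i\notin M}\overline{X}_i|$ pins down $r^{(1)}_N=\frac{1}{n_1-1}\bigl(\sum_{j\in M}|\overline{X}_j|+|\cap_{j\notin M}\overline{X}_j|\bigr)$, which is exactly the $i\in N_1$ branch of $\tilde r^{(1)}_i$. Summing all round-$1$ rates over $i\in N$ yields the scalar identity $(n-1)r^{(1)}_N=\sum_{i\in N}|\overline{X}_i|-(n-n_1)n_1 c$; substituting the value just found and solving for $c$ gives the $i\in N_1$ branch of $\tilde r^{(2)}_i$, while $\tilde r^{(2)}_i=0$ for $i\notin N_1$. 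Re-substituting $r^{(1)}_N$ and $c$ into $r^{(1)}_i=r^{(1)}_N+n_1 c-|\overline{X}_i|$ makes the combination $r^{(1)}_N+n_1 c$ collapse to $\frac{1}{n-n_1}\bigl(\sum_{j\notin M}|\overline{X}_j|-|\cap_{j\notin M}\overline{X}_j|\bigr)$, which is the $i\notin N_1$ branch of $\tilde r^{(1)}_i$. The system is nonsingular exactly because $n_1\geq 2$ and $n>n_1$ under the standing assumption $1<n_1<n_2=n$, so the solution is unique.

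I expect the one genuinely delicate point to be the role of the distinguished index $j_1$. I would justify its definition as follows: combining the tight per-user constraints $r^{(1)}_N-r^{(1)}_i\geq|\overline{X}_i|$ over $i\in N_1\setminus\{j\}$ with the cut constraint $r^{(1)}_{N_1\setminus\{j\}}\geq|\cap_{i\notin N_1\setminus\{j\}}\overline{X}_i|$ yields $(n_1-1)r^{(1)}_N\geq\sum_{i\in N_1\setminus\{j\}}|\overline{X}_i|+|\cap_{i\notin N_1\setminus\{j\}}\overline{X}_i|$ for \emph{every} $j\in N_1$. Thus $r^{(1)}_N$ is bounded below by the maximum of the right-hand side over $j$, and the maximizer $j_1$ singled out in Theorem~\ref{thm:SGORandom} is precisely the index for which this bound is tightest; the minimum-sum-rate solution attains it with equality, which is why $M=N_1\setminus\{j_1\}$ (rather than an arbitrary $(n_1-1)$-subset) appears in the closed form. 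Nonnegativity of every $\tilde r^{(l)}_i$ w.p.~$\rightarrow 1$ is inherited from Theorem~\ref{thm:SGORandom} and need not be re-established. Finally, I would note that Theorem~\ref{thm:SLORandomSpecial} follows from the identical elimination applied to the SLO system~\eqref{eq:SLE1E1}--\eqref{eq:SLE1E4} at $\ell=2$.
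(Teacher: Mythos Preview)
Your reduction is correct: specializing the SLE of Theorem~\ref{thm:SGORandom} to $\ell=2$ and eliminating as you describe does yield exactly the formulas of Theorem~\ref{thm:SGORandomSpecial}, and your algebra (for $r^{(1)}_N$, for $c$, and for the collapse of $r^{(1)}_N+n_1c$) checks out line by line. Your reading of $m_1=1$, $j_1\in N_1$, and your explanation of why the maximizing $j_1$ is the one that makes the cut bound tight are also correct.

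That said, your route is \emph{not} the one the paper takes. The paper does not prove Theorem~\ref{thm:SGORandomSpecial} at all; it cites \cite[Theorem~2]{HS2:2016}. Judging from how the companion result (Theorem~\ref{thm:SLORandomSpecial}) is proved here---directly, via a feasibility lemma (Lemma~\ref{lem:feasibility}) that checks every cut constraint using the concentration bound (Lemma~\ref{lem:concentration}) and the convexity-type inequality (Lemma~\ref{lem:PV}), plus an optimality lemma (Lemma~\ref{lem:optimality}) that exhibits explicit dual certificates for both stages of the MOLP---the cited proof of Theorem~\ref{thm:SGORandomSpecial} almost certainly follows the same pattern. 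In other words, the paper's approach establishes the $\ell=2$ closed forms \emph{first}, from scratch, and only afterwards states the general SLE characterizations (Theorems~\ref{thm:SLORandom} and~\ref{thm:SGORandom}), whose proofs are explicitly deferred.

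Your approach inverts this dependency: you take Theorem~\ref{thm:SGORandom} as a black box and derive Theorem~\ref{thm:SGORandomSpecial} by pure linear algebra, offloading all of the probabilistic work (concentration, feasibility of every cut, nonnegativity) onto the general theorem. This buys you a very short and clean argument with no fresh $\epsilon$'s or dual variables, but at the cost of resting on a result whose proof is not in this paper. The paper's direct route is longer and more delicate (many case splits over $|S\cap N_1|$ and $|S\setminus N_1|$, Bernoulli-type inequalities, two dual constructions), but it is self-contained and in fact serves as the template from which the deferred proofs of Theorems~\ref{thm:SLORandom} and~\ref{thm:SGORandom} are said to be generalized. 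So within the paper's internal logic your derivation is somewhat circular; as a standalone argument given Theorem~\ref{thm:SGORandom}, it is perfectly valid.
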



Such closed-form results lead to several interesting observations as follows. First, the minimum required number of chunks per packet for SLO is $n_1-1$, and this quantity for SGO is $\mathrm{LCM}(n_1-1,n-n_1)$. Note that this quantity for CDE is $n-1$ \cite{CW:2014}. Second, for any random instance $\{X_i\}$, the total sum-rate (normalized by the total number of packets ($k$)) is tightly concentrated around\vspace{-0.125em} \[r_{\text{SLO}} \triangleq \frac{n_1(q-q^{n_1})}{(n_1-1)(1-q^{n_1})}+\frac{q^{n_1}-q^n}{1-q^{n}},\vspace{-0.125em}\] and\vspace{-0.125em} \[r_{\text{SGO}} \triangleq \frac{(n-n_1+1)q-(n-n_1)q^{n}-q^{n-n_1+1}}{(n-n_1)(1-q^{n})},\vspace{-0.125em}\] in SLO and SGO, respectively, where $q\triangleq 1-p$. Note that this quantity for CDE is $r_{\text{CDE}}\triangleq\frac{n(q-q^n)}{(n-1)(1-q^n)}$ \cite{CW:2014}. 

Let $e_{\text{SLO}} \triangleq ({r_{\text{SLO}}-r_{\text{CDE}}})/{r_{\text{CDE}}}$ and $e_{\text{SGO}}\triangleq ({r_{\text{SGO}}-r_{\text{CDE}}})/{r_{\text{CDE}}}$ be the excess rate of SLO over CDE and the excess rate of SGO over CDE, respectively. Fig.~\ref{fig:eSLOvseSGO} depicts $e_{\text{SLO}}$ and $e_{\text{SGO}}$ versus $p$ for $n=6$ and $n_1=2,\dots,5$. 


Comparing $e_{\text{SLO}}(n,n_1,p)$ and $e_{\text{SGO}}(n,n_1,p)$ yields the following non-trivial observations. First, for any $1<n_1\leq \frac{n}{2}$ and any $0<p<1$, $e_{\text{SLO}}\geq e_{\text{SGO}}$, and for any $\frac{n}{2}<n_1\leq n-1$, there exists some $0<p^{*}<1$ such that for any $0<p\leq p^{*}$, $e_{\text{SLO}}\geq e_{\text{SGO}}$ and for any $p^{*}<p<1$, $e_{\text{SLO}}< e_{\text{SGO}}$. Second, for any $1< n_1\leq \frac{n}{2}$, there exists some $0<p_{*}<1$ such that $e_{\text{SLO}}$ decreases as $p$ increases from $0$ to $p_{*}$, and then the trend changes, i.e., $e_{\text{SLO}}$ increases as $p$ increases from $p_{*}$ to $1$. For any $\frac{n}{2}<n_1\leq n-1$, $e_{\text{SLO}}$ decreases as $p$ increases. This is in contrast to $e_{\text{SGO}}$ which, for any $1<n_1\leq n-1$, increases monotonically as $p$ increases. Third, for any $1<n_1\leq n-1$, $e_{\text{SLO}}(n,n_1,p)$ and $e_{\text{SGO}}(n,n-n_1,p)$ converge to the same limit as $p$ approaches $1$.


\begin{figure}[t]
\centering
\includegraphics[width=0.45\textwidth]{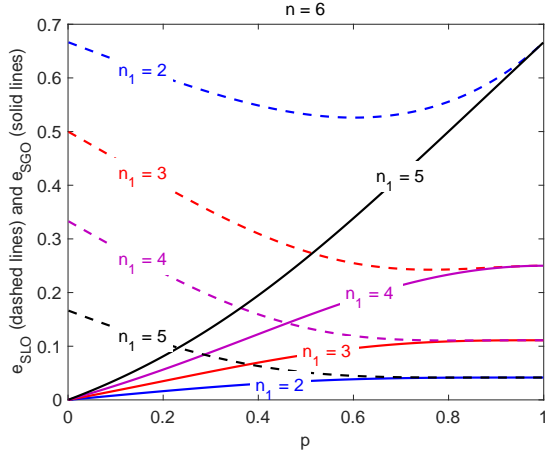}\vspace{-0.5em}
\caption{The excess rate of SLO and SGO over CDE.}\label{fig:eSLOvseSGO}\vspace{-1em}
\end{figure}

\section{Proofs}\label{sec:Proofs}
In this section, we prove Theorem~\ref{thm:SLORandomSpecial}, and refer the reader to \cite[Theorem~2]{HS2:2016} for the proof of Theorem~\ref{thm:SGORandomSpecial}. The proofs of Theorems~\ref{thm:SLORandom} and~\ref{thm:SGORandom} use similar techniques, and are deferred to an extended version of this work due to the space limit.

The proof of Theorem~\ref{thm:SLORandomSpecial} consists of two parts: feasibility of $\{\tilde{r}^{(1)}_i, \tilde{r}^{(2)}_i\}$ with respect to (w.r.t.)~\eqref{eq:SLOC2} and~\eqref{eq:SLOC3} (Lemma~\ref{lem:feasibility}), and optimality of $\{\tilde{r}^{(1)}_i,\tilde{r}^{(2)}_i\}$ w.r.t.~LP~\eqref{eq:SLOO1} and LP~\eqref{eq:SLOO2} (Lemma~\ref{lem:optimality}). 

The proofs rely on the following two lemmas. (The proofs of these lemmas can be found in \cite{HS2:2016}.) 


\begin{lemma}\cite[Lemma~1]{HS2:2016}\label{lem:concentration}
For any $1\leq l\leq \ell$ and any $S\subsetneq N_l$,  
\[\Bigg|\frac{1}{k}\bigg|\bigcap_{i\in N_l\setminus S} \overline{X}^{(l)}_i\bigg|-z_{|N_l|,|S|}\Bigg|<\epsilon,\] for any $\epsilon>0$, w.p.~$\rightarrow 1$ as $k\rightarrow\infty$, where 
\begin{equation}\label{eq:zms} 
z_{m,s} \triangleq \frac{(1-p)^{m-s}-(1-p)^{m}}{1-(1-p)^{m}},\end{equation} for any $0\leq s<m$.
\end{lemma}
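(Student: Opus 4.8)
Fix $1\le l\le\ell$ and a proper subset $S\subsetneq N_l$, and abbreviate $m\triangleq|N_l|$, $s\triangleq|S|$, and $A\triangleq\bigcap_{i\in N_l\setminus S}\overline{X}^{(l)}_i$. The plan is to write $|A|$ as a sum of independent indicator random variables over the $k$ packets and then invoke a sharp tail inequality. Under the random packet distribution the presence or absence of distinct packets at the users is mutually independent, so the events $\{x_t\in A\}$, for $t=1,\dots,k$, are independent; hence $|A|=\sum_{t=1}^{k}\mathbbm{1}[x_t\in A]$ is a sum of independent Bernoulli variables, which is exactly the structure a Chernoff/Hoeffding bound is designed to control.

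The key quantity is the common success probability of these indicators, which I will identify with $z_{m,s}$. A packet $x_t$ contributes to $A$ precisely when it belongs to the local ground set $X^{(l)}$ (i.e.\ at least one user of $N_l$ holds it) and is simultaneously absent from every user in $N_l\setminus S$. Each of the $m-s$ users of $N_l\setminus S$ misses $x_t$ independently with probability $q\triangleq 1-p$, while the requirement $x_t\in X^{(l)}$ --- the only way a packet can contribute at all --- supplies the normalizing factor $1-q^{m}$, namely the probability that at least one of the $m$ users of $N_l$ holds $x_t$. Carrying out this inclusion gives precisely the value $z_{m,s}=\frac{q^{m-s}-q^{m}}{1-q^{m}}$ of~\eqref{eq:zms}. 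I expect the bookkeeping that produces the denominator $1-q^{m}$ (rather than $1$) --- that is, correctly restricting attention to the packets of $X^{(l)}$ --- to be the step most prone to error and the main point that must be handled with care, since it is exactly this restriction that distinguishes the claimed limit from the naive product $q^{m-s}(1-q^{s})$.

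With the per-packet probability pinned down, the rest is routine. A Hoeffding (equivalently, a multiplicative Chernoff) bound yields $\Pr\!\big[\,\big|\tfrac1k|A|-z_{m,s}\big|\ge\epsilon\,\big]\le 2e^{-c\epsilon^{2}k}$ for a constant $c>0$, so the deviation probability for each fixed pair $(l,S)$ vanishes as $k\to\infty$. I would then take a union bound over all $1\le l\le\ell$ and all $S\subsetneq N_l$; because $n$ and $\ell$ are fixed constants there are at most $\ell\,2^{n}=O(1)$ such events, so the aggregate failure probability is still $O(1)\cdot 2e^{-c\epsilon^{2}k}\to 0$, giving the stated bound simultaneously for every $(l,S)$ w.p.~$\rightarrow 1$ as $k\rightarrow\infty$. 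The only genuine subtlety beyond the per-packet computation is keeping the number of union-bounded events independent of $k$, which holds here precisely because $n$ and $\ell$ do not grow with $k$.
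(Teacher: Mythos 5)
The overall skeleton you chose---write $\big|\cap_{i\in N_l\setminus S}\overline{X}^{(l)}_i\big|$ as a sum of $k$ independent per-packet indicators, apply a Chernoff/Hoeffding bound, and union-bound over the $O(2^n\ell)$ pairs $(l,S)$, which is a constant number since $n$ and $\ell$ do not grow with $k$---is indeed the standard route, and is essentially the argument of \cite{HS2:2016}, to which the paper defers for this lemma (no proof is given in the present paper). But the central step, the one you yourself flagged as ``most prone to error,'' is where your argument breaks. Writing $m\triangleq|N_l|$, $s\triangleq|S|$, $q\triangleq 1-p$, the event $\{x_t\in\cap_{i\in N_l\setminus S}\overline{X}^{(l)}_i\}$ is the conjunction ``$x_t$ is missed by every user of $N_l\setminus S$ \emph{and} $x_t\in X^{(l)}$,'' i.e., ``missed by all $m-s$ users of $N_l\setminus S$ and held by at least one user of $S$,'' whose probability is $q^{m-s}(1-q^{s})=q^{m-s}-q^{m}$---exactly the ``naive product'' you dismiss. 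A requirement placed \emph{inside} the event can only multiply probabilities; it cannot ``supply the normalizing factor $1-q^{m}$'' in a denominator. Denominators arise from \emph{conditioning}, and if you condition on $x_t\in X^{(l)}$ you may no longer sum indicators over all $k$ packets and divide by $k$: the natural normalizer becomes $|X^{(l)}|$, not $k$. So your identification of the per-packet mean with $z_{m,s}$ is a non sequitur that happens to land on the displayed formula.

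The repair is to argue conditionally: given $X^{(l)}$, its packets are i.i.d.\ with conditional probability exactly $z_{m,s}=\Pr\big(x_t\notin X_i\ \forall i\in N_l\setminus S\mid x_t\in X^{(l)}\big)=\frac{q^{m-s}-q^{m}}{1-q^{m}}$ of being missed by all of $N_l\setminus S$, so Hoeffding yields concentration of $\big|\cap_{i\in N_l\setminus S}\overline{X}^{(l)}_i\big|/|X^{(l)}|$ around $z_{m,s}$; separately, $|X^{(l)}|/k$ concentrates (around $(1-q^{m})/(1-q^{n})$ under the standing normalization $X=\cup_{i\in N}X_i$, which is what produces the $1-q^{n}$ denominators in $r_{\text{CDE}}$). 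Multiplying, the $\tfrac1k$-normalized quantity converges to $\frac{q^{m-s}-q^{m}}{1-q^{n}}$, which equals $z_{m,s}$ only when $m=n$, i.e., $l=\ell$---the case actually proved as \cite[Lemma~1]{HS2:2016}. For $l<\ell$ the lemma as printed holds verbatim only if $\tfrac1k$ is read as $\tfrac{1}{|X^{(l)}|}$ (equivalently, if the denominator in~\eqref{eq:zms} is $1-(1-p)^{n}$ for all $m$); note that the paper's own downstream algebra, e.g., the rewritings of~\eqref{eq:zzzz} and~\eqref{eq:zzz2}, is consistent with the common denominator $1-(1-p)^{n}$. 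Your union-bound bookkeeping is fine and needs no change; what must be redone is the per-packet probability computation and, with it, an explicit statement of which normalization and which conditioning you are using.
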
 


\begin{lemma} \cite[Lemma~2]{HS2:2016}\label{lem:PV}
For any $0<p<1$ and any $0< s_1<s_2<m$, $\frac{z_{m,s_1}}{s_1}<\frac{z_{m,s_2}}{s_2}$. 
\end{lemma}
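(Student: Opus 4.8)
The plan is to reduce the claim to the strict convexity of a single exponential map and then invoke the standard chord-slope characterization of convex functions. Writing $q \triangleq 1-p \in (0,1)$, I would first rewrite the quantity defined in~\eqref{eq:zms} as
\[z_{m,s} = \frac{q^{m-s}-q^{m}}{1-q^{m}},\]
and record two elementary facts: the denominator $1-q^{m}$ is a positive constant independent of $s$, and $z_{m,0}=0$. Consequently the target inequality $z_{m,s_1}/s_1 < z_{m,s_2}/s_2$ is equivalent to the single statement that the map $s \mapsto z_{m,s}/s$ is strictly increasing on the interval $(0,m)$.

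Next I would treat $z_{m,s}$ as a function of a real variable $s \in [0,m)$ and compute its second derivative in $s$,
\[\frac{\partial^{2}}{\partial s^{2}}\, z_{m,s} = \frac{(\ln q)^{2}\, q^{m-s}}{1-q^{m}} > 0,\]
where positivity follows because $0<q<1$ forces $1-q^{m}>0$, $q^{m-s}>0$, and $(\ln q)^{2}>0$. Hence $s \mapsto z_{m,s}$ is strictly convex on $[0,m)$. Since $z_{m,0}=0$, the ratio $z_{m,s}/s = (z_{m,s}-z_{m,0})/(s-0)$ is precisely the slope of the chord of a strictly convex function anchored at the origin; by the standard monotonicity of chord slopes of strictly convex functions, this slope is strictly increasing in $s$. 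Specializing to $s=s_1$ and $s=s_2$ with $0<s_1<s_2<m$ then yields $z_{m,s_1}/s_1 < z_{m,s_2}/s_2$, as claimed.

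I expect no serious obstacle; the only points needing care are bookkeeping. First, I would state explicitly that $0<q<1$ gives $\ln q<0$, so $(\ln q)^{2}>0$ and the convexity is \emph{strict}, which is what makes the final inequality strict rather than weak. Second, although $s_1,s_2,m$ arise as integer cardinalities in the intended application, the cleanest route is to prove the continuous real-variable statement, from which the integer case follows immediately. An equivalent calculus-free alternative I could present instead is to clear denominators and reduce the target to $\tfrac{e^{a s_1}-1}{s_1} < \tfrac{e^{a s_2}-1}{s_2}$ with $a \triangleq -\ln q>0$, and then note that $x \mapsto (e^{x}-1)/x = \sum_{n\geq 0} x^{n}/(n+1)!$ has strictly positive coefficients in every degree $\geq 1$ and is therefore strictly increasing for $x>0$; this bypasses convexity but rests on the same underlying fact.
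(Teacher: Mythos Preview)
Your argument is correct. Setting $q=1-p$, the identity $z_{m,0}=0$ together with the second-derivative computation
\[
\frac{\partial^{2}}{\partial s^{2}}\,z_{m,s}=\frac{(\ln q)^{2}\,q^{m-s}}{1-q^{m}}>0
\]
shows that $s\mapsto z_{m,s}$ is strictly convex on $[0,m)$, and the chord-through-the-origin monotonicity then gives $z_{m,s_1}/s_1<z_{m,s_2}/s_2$ for $0<s_1<s_2<m$. Your alternative power-series reduction to the monotonicity of $(e^{x}-1)/x$ is equally valid.

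As for comparison with the paper: the present paper does not supply its own proof of this lemma but cites \cite[Lemma~2]{HS2:2016}, so no line-by-line comparison is possible from the material provided. Your self-contained convexity argument is a clean and standard route; whether it matches or differs from the argument in \cite{HS2:2016} cannot be determined here.
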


The rest of the results hold ``w.p.~$\rightarrow 1$ as $k\rightarrow\infty$,'' and hereafter we omit this statement for brevity.

\begin{lemma}\label{lem:feasibility}
$\{\tilde{r}^{(1)}_i,\tilde{r}^{(2)}_i\}$ is feasible w.r.t.~\eqref{eq:SLOC2} and~\eqref{eq:SLOC3}.
\end{lemma}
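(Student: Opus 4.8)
The plan is to pass every constraint to its limiting form in the quantities $z_{m,s}$ of Lemma~\ref{lem:concentration} and then check finitely many scalar inequalities. Since at most $2^{n+1}$ subsets $S$ are relevant, a union bound over Lemma~\ref{lem:concentration} lets me replace, simultaneously and w.p.\ $\rightarrow 1$, each cardinality $\frac1k\big|\bigcap_{i\in N_l\setminus S}\overline{X}^{(l)}_i\big|$ by $z_{|N_l|,|S|}$ up to arbitrarily small error. Constraints that I show hold with a strict gap in the limit then survive this perturbation, whereas the constraints met with equality are exactly the defining equations \eqref{eq:SLE1E1} and \eqref{eq:SLE1E2} of the SLE and hold identically at finite $k$ by direct algebra (for instance $\tilde r^{(1)}_{N_1\setminus\{i\}}=\sum_{j\in N_1}|\overline{X}^{(1)}_j|-\sum_{j\ne i}|\overline{X}^{(1)}_j|=|\overline{X}^{(1)}_i|$), so these need no probabilistic argument.

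For \eqref{eq:SLOC2}, since $\tilde r^{(1)}_i=0$ off $N_1$, for $S\subsetneq N_1$ with $|S|=s$ I compute $\tilde r^{(1)}_S=\frac{s}{n_1-1}\sum_{j\in N_1}|\overline{X}^{(1)}_j|-\sum_{i\in S}|\overline{X}^{(1)}_i|$, which in the limit is $k\,\frac{s}{n_1-1}z_{n_1,n_1-1}$. The constraint then reads $\frac{s}{n_1-1}z_{n_1,n_1-1}\ge z_{n_1,s}$, i.e.\ $\frac{z_{n_1,s}}{s}\le\frac{z_{n_1,n_1-1}}{n_1-1}$, which is exactly Lemma~\ref{lem:PV} (strict for $s<n_1-1$, equality at $s=n_1-1$); the case $S=\emptyset$ I handle separately, using $X^{(1)}=\cup_{i\in N_1}X_i$ to force $\bigcap_{i\in N_1}\overline{X}^{(1)}_i=\emptyset$.

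The substance is \eqref{eq:SLOC3}. Writing $S_1=S\cap N_1$, $S_2=S\setminus N_1$ with $s_1=|S_1|,\,s_2=|S_2|$, I use $\tilde r^{(1)}_S=\tilde r^{(1)}_{S_1}$ and $\tilde r^{(2)}_S=\tilde r^{(2)}_{S_2}$, and note that $\big|\bigcap_{i\in N\setminus S}\overline{X}_i\big|$ concentrates to $k\,z_{n,s_1+s_2}$, depending only on $|S|$. The second-round terms telescope, and the limiting per-user increments $z_{n,t}-z_{n,t-1}=q^{n-t}(1-q)/(1-q^n)$ are increasing in $t$, so among all placements of $s_2$ users the minimum of $\tilde r^{(2)}_{S_2}$ is attained at $S_2=N_{2,s_2}$ and equals $k\,z_{n,s_2}$ in the limit (this minimizing subset is precisely the tight SLE constraint \eqref{eq:SLE1E2}). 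Hence it suffices to establish the scalar inequality $\frac{s_1}{n_1-1}z_{n_1,n_1-1}+z_{n,s_2}\ge z_{n,s_1+s_2}$ for all admissible $(s_1,s_2)$, the case $s_1=0$ being the equalities already matched to \eqref{eq:SLE1E2}.

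I expect this last inequality to be the main obstacle, because it couples two group sizes: the first term is governed by $z_{n_1,\cdot}$ while the right-hand increment is governed by $z_{n,\cdot}$, so Lemma~\ref{lem:PV}, which compares ratios at a single fixed group size, does not apply verbatim. The plan is to first reduce to the worst case in $s_2$: since $s\mapsto z_{n,s}$ has increasing increments, $z_{n,s_1+s_2}-z_{n,s_2}$ is increasing in $s_2$, so it suffices to treat the largest admissible $s_2$ (namely $n-n_1$ when $s_1\le n_1-1$, and $n-n_1-1$ when $s_1=n_1$). Evaluating the telescoped difference as a geometric sum turns the claim into $\frac{s_1}{n_1-1}(q-q^{n_1})\ge(q^{n_1-s_1}-q^{n_1})\cdot\frac{1-q^{n_1}}{1-q^n}$; bounding $\frac{1-q^{n_1}}{1-q^n}\le 1$ (valid as $n\ge n_1$) reduces it to $\frac{s_1}{n_1-1}(q-q^{n_1})\ge q^{n_1-s_1}-q^{n_1}$, which, after clearing the common factor $1-q^{n_1}$, is precisely Lemma~\ref{lem:PV} at group size $m=n_1$. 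The single boundary case $s_1=n_1$ I will dispatch directly, reducing it to $(n_1-1)q^{n_1}-n_1 q^{n_1-1}+1\ge 0$, which holds since the left side vanishes at $q=1$ and is decreasing on $(0,1)$.
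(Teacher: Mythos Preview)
Your proof is correct and rests on the same two lemmas (concentration and the convexity-type Lemma~\ref{lem:PV}) as the paper, but you organize the verification of~\eqref{eq:SLOC3} more economically. The paper splits into cases on $s_1\in\{0,\,n_1-1,\,1\le s_1<n_1-1,\,n_1\}$ and, within each, further sub-cases on the shape of $S_2$; for instance, it separately proves~\eqref{eq:zzz},~\eqref{eq:zzzz} and~\eqref{eq:zzz2} by direct manipulation of the $z$-values. You instead make two monotonicity reductions---first that $\tilde r^{(2)}_{S_2}$ is minimized over $|S_2|=s_2$ at $S_2=N_{2,s_2}$ (because the increments $z_{n,t}-z_{n,t-1}$ increase in $t$), then that the resulting scalar inequality $\frac{s_1}{n_1-1}z_{n_1,n_1-1}\ge z_{n,s_1+s_2}-z_{n,s_2}$ is hardest at the largest admissible $s_2$---which collapses the paper's case analysis to a single application of Lemma~\ref{lem:PV} plus the boundary check at $s_1=n_1$. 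This is cleaner; what the paper's longer treatment buys is explicitness about the equality cases: in particular, the paper handles $(s_1,s_2)=(n_1-1,\,n-n_1)$ via the exact combinatorial identity $|\overline X^{(1)}_i|+|\cap_{j\in N_1}\overline X^{(2)}_j|=|\overline X^{(2)}_i|$ rather than by concentration, which is a useful sanity check you do not record.
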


\begin{proof}
We need to show that: (i) $\tilde{r}^{(1)}_S\geq |\cap_{i\in N_1\setminus S} \overline{X}^{(1)}_i|$, $\forall S\subsetneq N_1$, and (ii) $\tilde{r}^{(1)}_S+\tilde{r}^{(2)}_S\geq |\cap_{i\in N\setminus S} \overline{X}^{(2)}_i|$, $\forall S\subsetneq N$. First, consider the inequality (i). Take an arbitrary $S\subsetneq N_1$. Let $s\triangleq |S|$. First, suppose that $s=n_1-1$. Then, $S = N_1\setminus \{i\}$ for some $i\in N_1$. Since $\tilde{r}^{(1)}_S=|\overline{X}^{(1)}_i|$ and $|\cap_{i\in N_1\setminus S} \overline{X}^{(1)}_i| = |\overline{X}^{(1)}_i|$, the inequality (i) holds. Next, suppose that $1\leq s<n_1-1$. Note that $\tilde{r}^{(1)}_S=\frac{s}{n_1-1}\sum_{i\in N_1}|\overline{X}^{(1)}_i|-\sum_{i\in S}|\overline{X}^{(1)}_i|$. By applying Lemma~\ref{lem:concentration}, $\frac{\tilde{r}^{(1)}_S}{k}>(\frac{s}{n_1-1}) z_{n_1,n_1-1}-\epsilon$ and $\frac{1}{k}|\cap_{i\in N_1\setminus S} \overline{X}^{(1)}_i|<z_{n_1,s}+\epsilon$, for any $\epsilon>0$. Thus, the inequality (i) holds so long as $\frac{z_{n_1,n_1-1}}{n_1-1}>\frac{z_{n_1,s}}{s}$, and this inequality follows from Lemma~\ref{lem:PV} since $1\leq s<n_1-1$ (by assumption). 



Next, consider the inequality (ii). Take an arbitrary $S\subsetneq N$. Let $S_1\triangleq S\cap N_1$ and $S_2\triangleq S\setminus N_1$. Obviously, $\tilde{r}^{(1)}_S+\tilde{r}^{(2)}_S=\tilde{r}^{(1)}_{S_1}+\tilde{r}^{(2)}_{S_2}$. Let $s \triangleq |S_1|$ and $t\triangleq |S_2|$. (Note that $0\leq s\leq n_1$, $0\leq t\leq n-n_1$, and $0<r+s<n$.) First, suppose that $s=0$. There are two sub-cases: (a) $S_2=\{n_1+1,\dots,n_1+t\}$ ($=N_{2,t}$), $1\leq t\leq n-n_1$, and (b) $S_2=\{n_1+i_1,\dots,n_1+i_{t}\}$, $1\leq t< n-n_1$, $1\leq i_1<\dots<i_t\leq n-n_1$, for arbitrary $i_j\in N\setminus N_1$, $1\leq j\leq t$, such that $i_j>j$ for some $1\leq j\leq t$. In the case (a), $\tilde{r}^{(1)}_{S_1}+\tilde{r}^{(2)}_{S_2}=\tilde{r}^{(2)}_{S_2}=\tilde{r}^{(2)}_{N_{2,t}}$ and $|\cap_{j\in N\setminus S}\overline{X}^{(2)}_j| = |\cap_{j\in N\setminus N_{2,t}}\overline{X}^{(2)}_j|=\tilde{r}^{(2)}_{N_{2,t}}$. Thus, the inequality (ii) holds. In the case (b), $\tilde{r}^{(1)}_{S_1}+\tilde{r}^{(2)}_{S_2}=\tilde{r}^{(2)}_{S_2}=\sum_{i\in S_2} |\cap_{j\not\in N_{2,i-n_1}}\overline{X}^{(2)}_j|-\sum_{i\in S_2} |\cap_{j\not\in N_{2,i-n_1-1}}\overline{X}^{(2)}_j|$. Again by applying Lemma~\ref{lem:concentration}, $\frac{\tilde{r}^{(2)}_{S_2}}{k}>\sum_{i\in S_2} z_{n,i}-\sum_{i\in S_2} z_{n,i-1}-\epsilon$ and $\frac{1}{k} |\cap_{j\in N\setminus N_{2,t}}\overline{X}^{(2)}_j|<z_{n,t}+\epsilon$, for any $\epsilon>0$. Thus, the inequality (ii) holds so long as 
\begin{equation}\label{eq:zzz}
\sum_{i\in S_2} z_{n,i}-\sum_{i\in S_2} z_{n,i-1}>z_{n,t}.\end{equation}
By rewriting~\eqref{eq:zzz} according to~\eqref{eq:zms}, it follows that~\eqref{eq:zzz} holds so long as $\sum_{i\in S_2} (1-p)^{t-i}>\sum_{1\leq j\leq t} (1-p)^{t-j}$. The latter inequality holds since $(1-p)^{t-i_j}\geq (1-p)^{t-j}$ for all $1\leq j\leq t$, and $(1-p)^{t-i_j}>(1-p)^{t-j}$ for some $1\leq j\leq t$, noting that $i_j>j$ for some $1\leq j\leq t$ (by assumption). 




Next, suppose that $s=n_1-1$. Note that $S_1 = N_1\setminus\{i\}$ for some $i\in N_1$. There are two sub-cases: (a) $S_2=\{n_1+1,\dots,n\}$, and (b) $S_2=\{n_1+i_1,\dots,n_1+i_t\}$, $0\leq t<n-n_1$, $1\leq i_1<\dots<i_t\leq n-n_1$. (For $t=0$, $S_2=\emptyset$.) In the case (a), $\tilde{r}^{(1)}_{S_1}+\tilde{r}^{(2)}_{S_2}=|\overline{X}^{(1)}_i|+|\cap_{j\not\in N_{2,n-n_1}}\overline{X}^{(2)}_j| = |\overline{X}^{(1)}_i|+|\cap_{j\in N_{1}}\overline{X}^{(2)}_j|$, and $|\cap_{j\in N\setminus S}\overline{X}^{(2)}_j| = |\overline{X}^{(2)}_i|$. Since $|\overline{X}^{(1)}_i|+|\cap_{j\in N_{1}}\overline{X}^{(2)}_j| = |\overline{X}^{(2)}_i|$ for all $i\in N_1$, the inequality (ii) holds. In the case (b), $\tilde{r}^{(1)}_{S_1}+\tilde{r}^{(2)}_{S_2} = |\overline{X}^{(1)}_i|+\sum_{i\in S_2} |\cap_{j\not\in N_{2,i-n_1}}\overline{X}^{(2)}_j|-\sum_{i\in S_2} |\cap_{j\not\in N_{2,i-n_1-1}}\overline{X}^{(2)}_j|$. Again by applying Lemma~\ref{lem:concentration}, $\frac{1}{k}(\tilde{r}^{(1)}_{S_1}+\tilde{r}^{(2)}_{S_2})>z_{n_1,n_1-1} + \sum_{i\in S_2} z_{n,i} - \sum_{i\in S_2} z_{n,i-1}-\epsilon$ and $\frac{1}{k}|\cap_{j\in N\setminus S}\overline{X}^{(2)}_j|<z_{n,n_1-1+t}+\epsilon$, for any $\epsilon>0$. Thus, the inequality (ii) holds so long as 
\begin{equation}\label{eq:zzzz}
z_{n_1,n_1-1}+\sum_{i\in S_2} z_{n,i}-\sum_{i\in S_2} z_{n,i-1}>z_{n,n_1-1+t}.
\end{equation} Again, rewriting~\eqref{eq:zzzz}, this inequality holds so long as $1+(1-p)^{n-1}+p(1-p)^{n-1}\sum_{i\in S_2} (1-p)^{-i}>(1-p)^{n-n_1-t}+(1-p)^{n_1-1}$. Note that $i_j\geq j$ for all $1\leq j\leq t$. Thus, $\sum_{i\in S_2} (1-p)^{-i}\geq \sum_{1\leq j\leq t} (1-p)^{-j}=\frac{(1-p)^{-t}-1}{p}$. Thus,~\eqref{eq:zzzz} holds so long as $(1-p)^{n}((1-p)^{-n_1-t}-(1-p)^{-t-1})<1-(1-p)^{n_1-1}$. Obviously, $(1-p)^{n}((1-p)^{-n_1-t}-(1-p)^{-t-1})<(1-p)^{n_1+t+1}((1-p)^{-n_1-t}-(1-p)^{-t-1})=(1-p)-(1-p)^{n_1}$ since $n>n_1+t$ (by assumption). Thus,~\eqref{eq:zzzz} holds so long as $(1-p)-(1-p)^{n_1}<1-(1-p)^{n_1-1}$, and this inequality holds since $n_1>1$ (by assumption). 

Now, suppose that $0< s<n_1-1$. (Note that for $t=0$, $S_2=\emptyset$.) Note that $\tilde{r}^{(1)}_{S_1}+\tilde{r}^{(2)}_{S_2}=\frac{s}{n_1-1}\sum_{i\in N_1} |\overline{X}^{(1)}_i|-\sum_{i\in S_1} |\overline{X}^{(1)}_i|$ $+$ $\sum_{i\in S_2} |\cap_{j\not\in N_{2,i-n_1}}\overline{X}^{(2)}_j|$ $-$ $\sum_{i\in S_2} |\cap_{j\not\in N_{2,i-n_1-1}}\overline{X}^{(2)}_j|$. Similarly as above, by applying Lemma~\ref{lem:concentration}, $\frac{1}{k}(\tilde{r}^{(1)}_{S_1}+\tilde{r}^{(2)}_{S_2})>(\frac{s}{n_1-1})z_{n_1,n_1-1}+\sum_{i\in S_2} z_{n,i}-\sum_{i\in S_2} z_{n,i-1}-\epsilon$ and $\frac{1}{k}|\cap_{j\in N\setminus S} \overline{X}^{(2)}_j|<z_{n,s+t}+\epsilon$, for any $\epsilon>0$. Thus, the inequality (ii) holds so long as $(\frac{s}{n_1-1})z_{n_1,n_1-1}+\sum_{i\in S_2} z_{n,i}-\sum_{i\in S_2} z_{n,i-1}>z_{n,s+t}$. Note that $\sum_{i\in S_2} z_{n,i}-\sum_{i\in S_2} z_{n,i-1}>z_{n,n_1-1+t}-z_{n_1,n_1-1}$ (by~\eqref{eq:zzzz}). Thus, the inequality (ii) holds so long as 
\begin{equation}\label{eq:zzz2}
z_{n,n_1-1+t}-\left(\frac{n_1-1-s}{n_1-1}\right)z_{n_1,n_1-1}>z_{n,s+t}.\end{equation} By rewriting~\eqref{eq:zzz2}, this inequality holds so long as $(1-p)^{n-t}((1-p)^{-n_1+1}-(1-p)^{-s})>(\frac{n_1-1-s}{n_1-1})((1-p)-(1-p)^{n_1})$. Obviously, $(1-p)^{-n_1+1}-(1-p)^{-s}>0$ since $s<n_1-1$. Thus, the inequality~\eqref{eq:zzz2} holds so long as $(1-p)^{n_1}((1-p)^{-n_1+1}-(1-p)^{-s})>(\frac{n_1-1-s}{n_1-1})((1-p)-(1-p)^{n_1})$ since $(1-p)^{n-t}\geq (1-p)^{n_1}$. Thus, the inequality (ii) holds so long as $\frac{(1-p)-(1-p)^{n_1-s}}{n_1-1-s}>\frac{(1-p)-(1-p)^{n_1}}{n_1-1}$. Since $n_1>1$ and $0< s<n_1-1$ (by assumption), the latter inequality holds so long as $\frac{1}{m}-\frac{1}{m+1}>\frac{(1-p)^{m}}{m}-\frac{(1-p)^{m+1}}{m+1}$, for any integer $m\geq 1$, and this inequality holds since $(1-p)^{m+1}>1-(m+1)p$ for any integer $m\geq 1$ (by the Bernoulli's inequality).

Lastly, suppose that $s=n_1$. Note that $S_1=N_1$, and $S_2=\{n_1+i_1,\dots,n_1+i_t\}$, $0\leq t<n-n_1$, $1\leq i_1<\dots<i_t\leq n-n_1$. (Note, again, that for $t=0$, $S_2=\emptyset$.) Using similar techniques as above, it can be shown that $\frac{n_1}{n_1-1}z_{n_1,n_1-1}+\sum_{i\in S_2} z_{n,i}-\sum_{i\in S_2} z_{n,i-1}>z_{n,n_1+t}$. By applying this inequality along with an application of Lemma~\ref{lem:concentration}, one can see that $\tilde{r}^{(1)}_{S_1}+\tilde{r}^{(2)}_{S_2} = \frac{1}{n_1-1}\sum_{i\in N_1} |\overline{X}^{(1)}_i|+$ $\sum_{i\in S_2} |\cap_{j\not\in N_{2,i-n_1}}\overline{X}^{(2)}_j|-$ $\sum_{i\in S_2} |\cap_{j\not\in N_{2,i-n_1-1}}\overline{X}^{(2)}_j|\geq |\cap_{j\in N\setminus S}\overline{X}^{(2)}_j|$.  
\end{proof}



\begin{lemma}\label{lem:optimality}
$\{\tilde{r}^{(1)}_i,\tilde{r}^{(2)}_i\}$ is optimal w.r.t.~LP~\eqref{eq:SLOO1} and LP~\eqref{eq:SLOO2}. 
\end{lemma}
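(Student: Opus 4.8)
The plan is to verify lexicographic optimality in two stages, in each stage matching a deterministic cut-set lower bound against the achievability already guaranteed by Lemma~\ref{lem:feasibility}. Concretely, I would exhibit a nonnegative combination of the constraints~\eqref{eq:SLOC2}--\eqref{eq:SLOC3} whose right-hand side equals the corresponding component of the claimed sum-rate; since such a combination is valid at every feasible point, it lower-bounds the relevant objective, and since $\{\tilde{r}^{(1)}_i,\tilde{r}^{(2)}_i\}$ is feasible and attains it, optimality follows. Note that only the achievability (feasibility) side invokes the random model through Lemma~\ref{lem:feasibility}; the lower bounds below are deterministic.

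\emph{First round (LP~\eqref{eq:SLOO1}).} For any feasible $r^{(1)}$ I would sum the $n_1$ constraints of~\eqref{eq:SLOC2} corresponding to $S=N_1\setminus\{i\}$, $i\in N_1$. Each contributes $r^{(1)}_{N_1\setminus\{i\}}\geq|\overline{X}^{(1)}_i|$, and since every user of $N_1$ is omitted from exactly one such $S$, the left-hand sides add up to $(n_1-1)r^{(1)}_{N_1}$. Hence $r^{(1)}_N=r^{(1)}_{N_1}\geq\frac{1}{n_1-1}\sum_{i\in N_1}|\overline{X}^{(1)}_i|$. A direct summation of the proposed rates gives $\tilde{r}^{(1)}_N=\frac{1}{n_1-1}\sum_{i\in N_1}|\overline{X}^{(1)}_i|$, so the bound is met with equality and $\{\tilde{r}^{(1)}_i\}$ is optimal for LP~\eqref{eq:SLOO1}.

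\emph{Second round (LP~\eqref{eq:SLOO2}).} The key simplification is that $\cap_{i\in N}\overline{X}_i=X\setminus\cup_{i\in N}X_i=\emptyset$, so the telescoping sum defining the proposed second-round rate collapses to $\tilde{r}^{(2)}_N=\sum_{i\notin N_1}\tilde{r}^{(2)}_i=|\cap_{i\in N_1}\overline{X}_i|-|\cap_{i\in N}\overline{X}_i|=|\cap_{i\in N_1}\overline{X}_i|$. For the lower bound I would invoke the single constraint~\eqref{eq:SLOC3} with $S=N\setminus N_1$ (so that $N_2\setminus S=N_1$): using $r^{(1)}_i=0$ for $i\notin N_1$, its left-hand side reduces to $r^{(2)}_{N\setminus N_1}$, giving $r^{(2)}_{N\setminus N_1}\geq|\cap_{i\in N_1}\overline{X}_i|$. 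Adding $r^{(2)}_{N_1}\geq0$ yields $r^{(2)}_N\geq|\cap_{i\in N_1}\overline{X}_i|=\tilde{r}^{(2)}_N$ at every feasible point, in particular on the face $r^{(1)}_N=\tilde{r}^{(1)}_N$ selected by the first stage. Since $\{\tilde{r}^{(1)}_i,\tilde{r}^{(2)}_i\}$ lies on that face and meets this bound, it is optimal for LP~\eqref{eq:SLOO2}.

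The only genuine obstacle is locating the correct cut for the second round. The naive bound obtained by treating both rounds together as a single CDE instance on $N$, namely $r^{(1)}_N+r^{(2)}_N\geq\frac{1}{n-1}\sum_{i\in N}|\overline{X}_i|$, is far too weak here, since it ignores the structure forced on $r^{(1)}$ by the first-round constraints; subtracting the fixed $\tilde{r}^{(1)}_N$ from it need not even give a positive bound. Recognizing that the single cut $S=N\setminus N_1$ together with the identity $\cap_{i\in N}\overline{X}_i=\emptyset$ already pins the bound down is what makes the argument collapse to one line, after which everything reduces to the routine matching of this bound against Lemma~\ref{lem:feasibility}.
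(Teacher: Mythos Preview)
Your argument is correct and uses the same certificates as the paper, but packaged differently. The paper writes out the explicit dual of each LP and exhibits dual-feasible vectors: for LP~\eqref{eq:SLOO1} it takes $\tilde s_S=\frac{1}{n_1-1}$ on the sets $S\subsetneq N_1$ with $|S|=n_1-1$, which is exactly your averaging of the $n_1$ constraints $r^{(1)}_{N_1\setminus\{i\}}\ge|\overline{X}^{(1)}_i|$; for LP~\eqref{eq:SLOO2} it keeps those same multipliers, adds $\tilde s_{N\setminus N_1}=1$, and uses $\tilde s_*=-1$ on the equality $r^{(1)}_N=r_*$, so that the round-1 contributions cancel and the dual objective collapses to $|\cap_{i\in N_1}\overline{X}_i|$. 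Your route to the same value is more direct: you bypass the dual formalism, invoke the single cut $S=N\setminus N_1$ in~\eqref{eq:SLOC3}, and use the built-in constraint $r^{(1)}_i=0$ for $i\notin N_1$ to strip the left-hand side down to $r^{(2)}_{N\setminus N_1}$. This saves the bookkeeping of the explicit dual and makes transparent that the second-round bound does not even need the first-round optimality constraint; the paper's dual construction obscures this by carrying the round-1 multipliers only to cancel them against $\tilde s_*$. Both approaches then match the bound against Lemma~\ref{lem:feasibility} and the telescoping identity $\tilde r^{(2)}_N=|\cap_{i\in N_1}\overline{X}_i|$.
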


\begin{proof}
The dual of LP~\eqref{eq:SLOO1} is given by 
\begin{eqnarray}\label{eq:dualLP1}
\mathrm{max.} && \nonumber \hspace{-1.5em} \sum_{S\subsetneq N_1} \bigg|\bigcap_{i\in N_1\setminus S}\hspace{-0.25em}\overline{X}^{(1)}_i\bigg|s_{S}+\hspace{-0.25em}\sum_{S\not\subset N_1} \bigg|\bigcap_{i\in N\setminus S}\overline{X}^{(2)}_i\bigg| s_{S}\hspace{2em}\\ 
\mathrm{s.t.} && \hspace{-1.5em} \label{eq:LP1C1} \sum_{S\subsetneq N} s_{S}\mathds{1}_{\{i\in S\}}\leq 1, \hspace{0.5em} \forall i\in N_1\\
&& \hspace{-1.5em} \label{eq:LP1C2} \sum_{S\not\subset N_1} s_{S}\mathds{1}_{\{i\in S\}}\leq 0, \hspace{0.5em} \forall i\in N\\ \nonumber
&&  \hspace{-1.5em} ({s}_{S}\geq 0, \forall S\subsetneq N).
\end{eqnarray} Take $\tilde{s}_{S}=\frac{1}{n_1-1}$ $\forall S\subsetneq N_1$, $|S|=n_1-1$, and $\tilde{s}_{S}=0$ for any other $S$. Note that $\{\tilde{s}_{S}\}$ meets~\eqref{eq:LP1C1} and~\eqref{eq:LP1C2} with equality, and thus, it is feasible w.r.t.~\eqref{eq:LP1C1} and~\eqref{eq:LP1C2}. Note, also, that $\sum_{S\subsetneq N_1} |\cap_{i\in N_1\setminus S}\overline{X}^{(1)}_i|\tilde{s}_{S}+\sum_{S\subsetneq N: S\not\subset N_1} |\cap_{i\in N\setminus S}\overline{X}^{(2)}_i| \tilde{s}_{S} = \tilde{r}^{(1)}_N$. By the duality principle, $\{\tilde{r}^{(1)}_i,\tilde{r}^{(2)}_i\}$ is thus optimal w.r.t.~LP~\eqref{eq:SLOO1}. Note that the optimal value is $r_{*}\triangleq\frac{1}{n_1-1}\sum_{i\in N_1}|\overline{X}^{(1)}_i|$. Moreover, the dual of LP~\eqref{eq:SLOO2} is given by 
\begin{eqnarray}\label{eq:dualLP12}
\mathrm{max.} && \nonumber \hspace{-1.5em} \sum_{S\subsetneq N_1} \bigg|\bigcap_{i\in N_1\setminus S}\hspace{-0.25em}\overline{X}^{(1)}_i\bigg|s_{S}+\hspace{-0.25em}\sum_{S\not\subseteq N_1} \bigg|\bigcap_{i\in N\setminus S}\overline{X}^{(2)}_i\bigg| s_{S}+r_{*} s_{*}\\ 
\mathrm{s.t.} && \hspace{-1.5em} \label{eq:LP12C1} \sum_{S\neq N_1} s_{S}\mathds{1}_{\{i\in S\}}+s_{*}\leq 0, \hspace{0.5em} \forall i\in N_1\\
&& \hspace{-1.5em} \label{eq:LP12C2} \sum_{S\not\subseteq N_1} s_{S}\mathds{1}_{\{i\in S\}}\leq 1, \hspace{0.5em} \forall i\in N\\ \nonumber
&&  \hspace{-1.5em} ({s}_{S}\geq 0, \forall S\subsetneq N, S\neq N_1),
\end{eqnarray} where $s_{*}$ is unrestricted in sign. Take $\tilde{s}_{S}=\frac{1}{n_1-1}$ $\forall S\subsetneq N_1$, $|S|=n_1-1$, $\tilde{s}_{N\setminus N_1}=1$, $\tilde{s}_{*}=-1$, and $\tilde{s}_{S}=0$ for any other $S$. Note that $\{\{\tilde{s}_{S}\},\tilde{s}_{*}\}$ meets~\eqref{eq:LP12C1} and~\eqref{eq:LP12C2} with equality. Thus, $\{\{\tilde{s}_{S}\},\tilde{s}_{*}\}$ is feasible w.r.t.~\eqref{eq:LP12C1} and~\eqref{eq:LP12C2}. Note, also, that
$\sum_{S\subsetneq N_1} |\cap_{i\in N_1\setminus S}\overline{X}^{(1)}_i|\tilde{s}_{S}+\sum_{S\not\subseteq N_1} |\cap_{i\in N\setminus S}\overline{X}^{(2)}_i| \tilde{s}_{S}+r_{*}\tilde{s}_{*} = \tilde{r}^{(2)}_N$. By the duality principle, $\{\tilde{r}^{(1)}_i,\tilde{r}^{(2)}_i\}$ is thus optimal w.r.t.~LP~\eqref{eq:SLOO2}, and the optimal value is $|\cap_{i\in N_1} \overline{X}^{(2)}_i|$.
\end{proof}

\bibliographystyle{IEEEtran}
\bibliography{CDERefs}

\end{document}